\documentclass[twocolumn,english,pre,floatfix,citeautoscript,nofootinbib]{revtex4}
\usepackage{amsbsy}
\usepackage{latexsym,epsfig,graphicx}
\usepackage{dcolumn}
\usepackage{subfigure}
\usepackage{comment}
\usepackage{color}
\usepackage[colorlinks,urlcolor=blue,citecolor=blue]{hyperref}
\usepackage{amstext}
\usepackage{amssymb}
\usepackage{setspace}
\usepackage{amsmath}
\usepackage{makeidx}
\usepackage{amsthm}
\usepackage{bm}
\usepackage{ulem}
\usepackage{multirow}
\usepackage{mathrsfs}

\usepackage{array}

\newcommand{\ii}{\mathrm{i}}
\newcommand{\e}{\mathrm{e}}
\newcommand{\dd}{\mathrm{d}}

\newcommand{\im}{\mathrm{Im}}
\newcommand{\re}{\mathrm{Re}}
\newcommand{\diag}{\mathrm{diag}}
\newtheorem{theorem}{Theorem}

\begin{document}

\title{Dirac monopole potentials with high charges underlying nonlinear waves }
\author{Yan-Hong Qin$^{1}$}
\author{Jin-Peng Yang$^{2}$}
\author{Li-Chen Zhao$^{3,4,5,6}$}\email{zhaolichen3@nwu.edu.cn}
\address{$^{1}$School of Physical Science and Technology, Xinjiang University, Urumqi, 830046, China}
\address{$^{2}$College of Mathematics and System Sciences, Xinjiang University, Urumqi 830046, China}
\address{$^{3}$School of Physics, Northwest University, Xi'an 710127, China}
\address{$^{4}$Shaanxi Key Laboratory for Theoretical Physics Frontiers, Xi'an 710127, China}
\address{$^{5}$Peng Huanwu Center for Fundamental Theory, Xi'an 710127, China}
\address{$^{6}$Fundamental Discipline Research Center for  Quantum Science and technology of Shaanxi Province, Xi'an 710127, China}
\begin{abstract}
We investigate topological vector potentials underlying the phases of nonlinear waves by performing Dirac's magnetic monopole theory in an extended complex plane, taking into account self-steepening effects while ignoring the usual cubic nonlinearities. We uncover that the simple poles and third-order poles of the density function constitute virtual monopole fields with higher charges $\pm3/2$ and $\pm5/2$, respectively. These results are in sharp contrast to the previous findings, where the simple zeros of the density function yield charges $\pm1/2$.  We choose scalar and vector rogue waves as well as bright solitons to demonstrate the Dirac monopole potentials. These results confirm a series of quantized magnetic charges for virtual monopoles underlying nonlinear waves, and reveal new relations between poles of density functions and topological charges.

\end{abstract}
\pacs{02.30.Ik, 05.45.Yv, 42.81.Dp}
\date{\today}
	
\maketitle
	
\section{Introduction}

Magnetic monopoles, first proposed by Dirac in the context of quantum mechanics \cite{Dirac1}, have long intrigued physicists due to their profound implications for quantum mechanics \cite{Quantum1,Quantum2} and charge quantization \cite{Dirac1,monopole1}. Dirac monopoles are predicted by analyzing zeros characteristics of wave functions and the phase gradient properties in three-dimensional real space \cite{Dirac1}. Dirac demonstrated that the monopoles are located at the endpoints of nodal lines and admit quantized charges $n/2$ ($n$ is an integer) \cite{Dirac1}. It is of great significance to extend the Dirac monopole theory to other spaces, such as momentum \cite{momentum1,momentum2,momentum3,momentum4}, parameters \cite{parameter1,parameter2,parameter3,parameter4} and even complex space \cite{Zhao5,Zhao3}.  Density zeros in the complex plane correspond to singularities of the topological vector potential, which can be identified as Dirac's virtual magnetic monopole fields. The monopoles in the complex plane have been used to explain striking phases of nonlinear localized waves, such as rogue waves \cite{Zhao5,Zhao2}, dark solitons \cite{Zhao3,Qin3}, and other localized waves \cite{Zhao2,Zhao1,Zhao4}.

The virtual magnetic monopoles for those localized waves with the usual cubic nonlinearities and other types of nonlinearities all carry charges of $\pm 1/2$ in the complex plane \cite{Zhao5,Zhao2,Zhao1,Zhao4,Zhao3,Qin3}. Distinct charges $\pm1$ were found by tracking density zeros using Dirac's theory across different parameter spaces, rather than in the complex plane \cite{Zhao4}. However, high odd multiple magnetic charges are still not found in the complex plane, partly due to a lack of a profound understanding of the quantitative relations between the density function and topological singularities. As far as we know, the previously reported charges $\pm 1/2$ all correspond to the simple zeros of the density functions. It could be expected that multiple zeros of density can induce high-charge monopoles from Dirac monopole field studies \cite{Ray2014,Pietila,Ray2015,JHZheng}, but it is hard to find multiple zeros of the density in the complex plane.

In this work, we show that high Dirac monopole charges $\pm3/2$ and even $\pm5/2$ exist for nonlinear waves with self-steepening effects but without the usual cubic nonlinearities. We extend Dirac's magnetic monopole theory to the complex plane, applying this framework to rogue waves (RWs) and bright solitons (BSs) solutions in both scalar and vector derivative nonlinear Schr\"{o}dinger (DNLS) systems. For RWs, we find pair charges of $\pm\frac{1}{2}$ and $\pm\frac{3}{2}$ in both the scalar and vector DNLS systems.  Specifically, the density poles of a scalar single-hump BS (SHBS) or a vector SHBS-SHBS can induce singularities of topological vector potentials. These singularities behave as periodically distributed point-like virtual monopoles carrying the high topological charge of $\pm3/2$ and a quantized flux of $\pm3\pi$. Even more remarkably, we uncover that symmetric double-hump BSs (DHBS) in the coupled case admit various topological charges of $\pm1/2$, $\pm3/2$, and $\pm5/2$, which correspond precisely to simple zeros, simple poles, and third-order poles of density functions, respectively. These results are in sharp contrast to  those for nonlinear waves with cubic nonlinearities, and even nonlinear waves with both self-steepening effects and usual cubic nonlinearities \cite{Zhao1}, for which only virtual monopoles of charge $\pm 1/2$ can be found. We for the first time show that the poles of density function in the extended complex plane can induce the singularities of vector potentials, in contrast to the Dirac's usual monopole theory. Further analysis suggests that high odd multiple magnetic charges are induced by high-order poles of density function in the complex plane, and high even multiple magnetic charges could be found via multiple zeros of wave function.

The remainder of this paper is organized as follows. In Sec.~\ref{sec2}, we analyze the phase of the scalar eye-shaped (ES) RW via the topological vector potential. Its solution contains four pairs of virtual magnetic monopoles: two with charge $\pm\frac{3}{2}$ and two with charge $\pm\frac{1}{2}$. The latter collide and merge when valleys emerge. Two mergers result in a time interval during which the phase shift $\Delta\phi=2\pi$. In Sec.~\ref{sec3}, we extend our study to the vector system, focusing on the four-petaled (FP)-ES RW. For vector RWs, we uncover that the monopoles carrying $\pm1/2$ charges merge on the imaginary axis in FPRWs and on the real axis in ESRWs, both yielding zero net phase shift. In Sec.~\ref{sec4}, we demonstrate that the SHBS in the scalar DNLS system host periodically distributed monopoles with a high topological charge of $\pm\frac{3}{2}$ and a quantized flux of $\pm3\pi$. This structure allows the soliton's phase shift to vary over an unusual range of $(0, 3\pi)$. Even more remarkably, in Sec.~\ref{sec5}, we analytically investigate topological phase of DHBSs in the vector DNLS system. We uncover a family of virtual magnetic monopoles with quantized charges of $\pm\frac{1}{2}$, $\pm\frac{3}{2}$, and $\pm\frac{5}{2}$. We establish that simple zeros, simple poles, and third-order poles of the density function correspond precisely to these distinct charge values. Typical examples are illustrated by considering three types of static symmetric DHBSs. Results are summarized and discussed in Sec.~\ref{sec6}.

\section{topological phase of scalar rogue wave}\label{sec2}

We begin with a scalar system to study the Dirac monopole vector potentials of nonlinear waves, taking into account self-steepening effects while ignoring the usual cubic nonlinearities; this is the DNLS of the form  \cite{ref3}
\begin{align}\label{model}
\ii q_t+q_{xx}+\frac{2}{3}\ii(|q|^2q)_x=0,
\end{align}
where the last term accounts for self-steepening effects (also known as shocking). This equation governs the evolution of slowly varying wave amplitudes and has been widely studied across plasma physics \cite{ref1,ref2,ref3,Mio,Plasma1,Plasma2,Plasma3,Plasma4,Plasma5}, optical fibers \cite{optic1,optic2,optic3,optic4}, and ferromagnetic media \cite{FM1,FM2,FM3}. The physical interpretations of $q$, $x$ and $t$ vary across these contexts. Specifically, in magnetized plasmas, $q$ is the normalized complex envelope of Alfvén wave, $x$ denotes the spatial coordinate along the background magnetic field, and $t$ corresponds to the slow temporal scale. In optical fibers for femtosecond pulses, $q$ represents a normalized complex amplitude of the pulse envelope, $t$ is a normalized distance along the fiber, and $x$ is the normalized time within the frame of the reference moving along the fiber at the group velocity. In ferromagnetic medium, $q$ represents the slowly varying envelope of the electromagnetic wave propagation in the ferromagnetic media, $t$ represent the slow temporal evolution of the pulse envelope and $x$ describes the slow spatial distribution of the electromagnetic wave pulse envelope along the propagation direction. As an integrable system, the DNLS equation admits a variety of localized wave solutions, such as BSs, RWs, and semirational solutions \cite{Kaup,Ling,He1,Xu,Zuo1,Chan,Min1}, can be derived in this integrable system using inverse scattering method \cite{Kaup}, Darboux transformation \cite{Ling,He1,Xu,Zuo1}, and Hirota methods \cite{Chan,Min1}.

For nonlinear waves with the usual cubic nonlinearities, virtual monopole charges in the complex plane are always $\pm 1/2$ \cite{Zhao5,Zhao3,Zhao4}. The same holds when self-steepening effects are added to cubic nonlinearities \cite{Zhao1}.  The DNLS system considered here, by contrast, retains self-steepening while completely discarding the usual cubic nonlinearities. Whether this distinct nonlinear effect can support higher topological charges is the central question we address. Among the various localized solutions of this system, RWs are particularly suitable for a first analysis, as they are spatiotemporally doubly localized and admit only a finite number of monopole pairs \cite{Zhao5}, making the topological analysis more explicit and tractable than for other localized waves with periodic monopole distributions. We therefore first investigate the topological phase of RWs within the scalar DNLS system. The RW solutions of system \eqref{model} can be obtained by performing a Darboux transformation on the seed solution $q_0\!=\!s\e^{-\ii k[x-(s^2-k)t]}$ and are expressed as \cite{Lin}
\begin{align}\label{rwsol}
q=\sqrt{\frac{3}{2}}\left[\ii\frac{q_0}{ k}\frac{\chi_0(\chi_0^*+k)}{\chi_0^*(\chi_0+k)}\frac{1+s^2r^2\widehat{\rho}}{1+s^2r^2\rho}\right]_x,
\end{align}
where $r=\im(\sqrt{s^2-2k})$, $\chi_0=-\frac{r^2+\ii sr}{2}$, $\rho={u}\left({u}^*+\frac{1}{\chi_0^*}\right)$, $\widehat{\rho}=\left({u}+\frac{1}{\chi_0}-\frac{1}{\chi_0+k}\right)\left({u}^*+\frac{1}{\chi_0^*+k}\right),$
and ${u}=\ii \frac{s}{r}\left[-x+(s^2-r^2)t-\frac{1}{sr}\right]+srt.$
To facilitate precise analysis, we perform the following coordinate transformation $(x,t)\mapsto (\pmb{x},\pmb{t})$,
$$\pmb{x}=-x+(s^2-r^2)t-\frac{1}{sr}-\frac{1}{2k},~\pmb{t}=sr t+\frac{1}{2k}.$$
The intensity of solution \eqref{rwsol} exhibits three extreme points on $(\pmb{x},\pmb{t})$ plane: a peak at $p_0=(0,0)$ and two valleys at $p_1=\left(-\frac{\sqrt{3}(2+\mu^2)}{\tau},\pmb t_1\right), p_2=\left(\frac{\sqrt{3}(2+\mu^2)}{\tau},\pmb t_2\right)$,
with $\pmb t_1\!=\!-\pmb t_2\!=\!-\frac{\sqrt{3}\mu^2}{\tau}$, $\mu\!=\!s/r$ and $\tau\!=\!s^2(1\!+\!\mu^2)\sqrt{4\!+\!\mu^2}>0$. Thus, the solution \eqref{rwsol} exclusively exhibits the ESRW pattern, as shown in Fig. \hyperref[fig1]{1(A)}. Since the transformation on $t$ involves only constant translation and scaling, the peak and valleys cannot appear together in any RW of this type, in contrast to RWs in standard NLS systems \cite{Akhmediev}.

To study the topological phase of the RW, we first give its phase gradient ${\phi}_{\pmb{x}}(\pmb{x},\pmb{t})=\frac{F(\pmb{x},\pmb{t})}{\prod_{i=1}^{4}[\pmb{x}-\pmb{x}_+^{[i]}(\pmb{t})][\pmb{x}-\pmb{x}_-^{[i]}(\pmb{t})]}$ \cite{Dirac1},
where the expression of $F(\pmb{x},\pmb{t})$ is given in Appendix \ref{app2}. Here, we consider $\mu>0$ without loss of generality. The phase distribution of the RW in the $(\pmb{x},\pmb{t})$ plane is obtained via the variable-upper-limit integral $\phi (\pmb{x},\pmb{t})=\int_{-\infty}^{\pmb{x}}\phi_{\pmb{x}} \mathrm{d} \pmb{x}$, as shown in Fig. \hyperref[fig1]{1(B)}, where abrupt phase jumps are clearly visible in the regions of the central hump and two valleys.  Recent studies have shown that the singularities of the phase gradient correspond to the density zeros of nonlinear waves in the extended complex plane, forming virtual monopole fields with a quantized flux of elementary $\pi$ \cite{Zhao3,Zhao5}. In the present RW solution, however, these singularities correspond to both density zeros and density poles. Indeed, the density is
$|q|^2=\frac{3}{2}s^2\frac{\prod_{i=1}^{2}[\pmb{x}-\pmb{x}_+^{[i]}(\pmb{t})][\pmb{x}-\pmb{x}_-^{[i]}(\pmb{t})]}
{\prod_{i=3}^{4}[\pmb{x}-\pmb{x}_+^{[i]}(\pmb{t})][\pmb{x}-\pmb{x}_-^{[i]}(\pmb{t})]}$, while the phase gradient $\phi_{\pmb{x}}$ has the denominator $\prod_{i=1}^{4}[\pmb{x}-\pmb{x}_+^{[i]}(\pmb{t})][\pmb{x}-\pmb{x}_-^{[i]}(\pmb{t})]$. Comparing the two, the singularities for $i=1,2$ correspond to simple zeros of density function, whereas those for $i=3,4$ correspond to simple poles of density function. Previous studies have linked simple zeros of density to monopole fields with quantized flux $\pi$. The emergence of density poles in the present solution naturally necessitates an investigation of their topological properties. We thus examine the topological vector potential associated with these singularities in the complex plane.

Detailed analysis of $\phi_{\pmb{x}}$ identifies eight complex-valued singularities:
$\pmb{x}_\pm^{[1]}=\pm\ii \frac{\mu}{\xi}-\frac{\sqrt{\zeta_1\mp\ii \delta_1 {\pmb{t}}-\xi^2 {\pmb{t}}^2}}{\mu\xi},
\pmb{x}_\pm^{[2]}=\pm\ii \frac{\mu}{\xi}+\frac{\sqrt{\zeta_1\mp\ii \delta_1 {\pmb{t}}-\xi^2 {\pmb{t}}^2}}{\mu\xi},
\pmb{x}_\pm^{[3]}=\pm\ii \frac{\mu}{\xi}-\frac{\sqrt{\zeta_2\pm\ii \delta_2 {\pmb{t}}-\xi^2 {\pmb{t}}^2}}{\mu\xi},
\pmb{x}_\pm^{[4]}=\pm\ii \frac{\mu}{\xi}+\frac{\sqrt{\zeta_2\pm\ii \delta_2 {\pmb{t}}-\xi^2 {\pmb{t}}^2}}{\mu\xi},$
with $\zeta_1=\mu^2(3+2\mu^2), \zeta_2=-\mu^2(1+2\mu^2), \xi=s^2(1+\mu^2), \delta_1=2\mu\xi(2+\mu^2), \delta_2=2\mu\xi^3$. Their complex nature necessitates analytic continuation from the real variable to the complex plane:  $\pmb{x} \mapsto z = \pmb{x} + \ii \pmb{y}$, and accordingly extend the phase gradient $\phi_{\pmb{x}}(\pmb{x}, \pmb{t}) \mapsto \phi_z(z, \pmb{t})$.
These eight singularities then take the form of complex coordinates $z_n = \pmb{x}_n + \ii\pmb{y}_n$ ($n = 1, \cdots, 8$), where for each $i=1,\cdots,4$, $\pmb{x}_-^{[i]}$ and $\pmb{x}_+^{[i]}$ map to $n=2i-1$ and $n=2i$, with $\pmb{x}_n = \mathrm{Re}[\pmb{x}_{\pm}^{[i]}], \pmb{y}_n = \mathrm{Im}[\pmb{x}_{\pm}^{[i]}]$.

A complex vector potential is then introduced on the $(\pmb{x},\pmb{y})$-plane as
$\mathscr{A}_c=\frac{\partial\phi(z)}{\partial\pmb{x}}\textbf{e}_{\pmb{x}}+\frac{\partial\phi(z)}{\partial\pmb{y}}\textbf{e}_{\pmb{y}}$. The corresponding topological vector potential $\mathbf{A}=\mathrm{Re}[\mathscr{A}_c]$ is given by
$\mathbf{A}=\sum_{n=1}^{8}\frac{\Omega_n [(\pmb{x}-\pmb{x}_n)\mathbf{e}_{\pmb{y}}-(\pmb{y}-\pmb{y}_n)\mathbf{e}_{\pmb{x}}]}{2 \pi[(\pmb{x}-\pmb{x}_n)^2+(\pmb{y}-\pmb{y}_n)^2]}$ \cite{Zhao5,Zhao2,Zhao3}. Each $z_n$ in complex plane acts as a virtual magnetic monopole within the vector potential fields. Their trajectories evolving with $\pmb{t}$ characterize dynamics of four virtual monopole pairs dominating topological properties of rogue waves. Each monopole carries a quantized flux $\Omega_n = 2\pi \ii \lim\limits_{z \to z_n^{\pm}} (z-z_n^{\pm}) \phi_z$. Specifically, $\Omega_n = \pm\pi$ for $n=1,\cdots,4$, corresponding to simple zeros of the density function with charges $\pm1/2$, while $\Omega_n = \pm3\pi$ for $n=5,\cdots,8$, for simple poles of the density function with charges $\pm3/2$. Thus, the topological vector potential for the scalar RW can be explicitly expressed as:
\begin{align}\label{TVP_sc_rw}
\mathbf{A}=&-\frac{1}{2}\sum\limits_{n=1,4}\frac{ (\pmb{x}-\pmb{x}_n)\mathbf{e}_{\pmb{y}}-({\pmb{y}}-{\pmb{y}}_n)\mathbf{e}_{\pmb{x}}}{(\pmb{x}-\pmb{x}_n)^2+({\pmb{y}}-{\pmb{y}}_n)^2}\nonumber\\
&+\frac{1}{2}\sum\limits_{n=2,3}\frac{ (\pmb{x}-\pmb{x}_n)\mathbf{e}_{\pmb{y}}-({\pmb{y}}-{\pmb{y}}_n)\mathbf{e}_{\pmb{x}}}{(\pmb{x}-\pmb{x}_n)^2+({\pmb{y}}-{\pmb{y}}_n)^2}\nonumber\\	
&-\frac{3}{2}\sum\limits_{n=5,8}\frac{
(\pmb{x}-\pmb{x}_n)\mathbf{e}_{\pmb{y}}-({\pmb{y}}-{\pmb{y}}_n)\mathbf{e}_{\pmb{x}}}{(\pmb{x}-\pmb{x}_n)^2+({\pmb{y}}-{\pmb{y}}_n)^2}\nonumber
\end{align}
\begin{align}
&+\frac{3}{2}\sum\limits_{n=6,7}\frac{
(\pmb{x}-\pmb{x}_n)\mathbf{e}_{\pmb{y}}-({\pmb{y}}-{\pmb{y}}_n)\mathbf{e}_{\pmb{x}}}{(\pmb{x}-\pmb{x}_n)^2+({\pmb{y}}-{\pmb{y}}_n)^2}.
\end{align}
The corresponding magnetic field will be zero everywhere except at those singular points; that is, $\mathbf{B} = \nabla \times \mathbf{A} = \mathbf{e}_z\sum\limits_n \Omega_n \delta (\pmb{x} - \pmb{x}_n, \pmb{y} - \pmb{y}_n)$. A proof of Eq.~\eqref{TVP_sc_rw} is given in Appendix \ref{app0}. It should be noted that as $\pmb{t}$ crosses $0$, $\pmb{x}\pm^{[3]}$ and $\pmb{x}\pm^{[4]}$ cross the branch cut. A fixed branch thus leads to discontinuity; hence, switching branches at $\pmb{t}=0$ is necessary for continuity. Consequently, in the scalar RW under self-steepening, the topological vector potential $\mathbf{A}$ Eq.\eqref{TVP_sc_rw} acquires contributions from two types of virtual monopoles: those with charges $\pm1/2$ arising from simple zeros of the density, and those with charges $\pm3/2$ arising from simple poles. This yields a topological structure richer than previously reported cases, where only $\pm1/2$ charges (from simple zeros) appear \cite{Zhao5,Zhao2,Zhao1,Zhao4,Zhao3,Qin3}.

\begin{figure*}[htpb]	
\centering
\includegraphics[width=175mm]{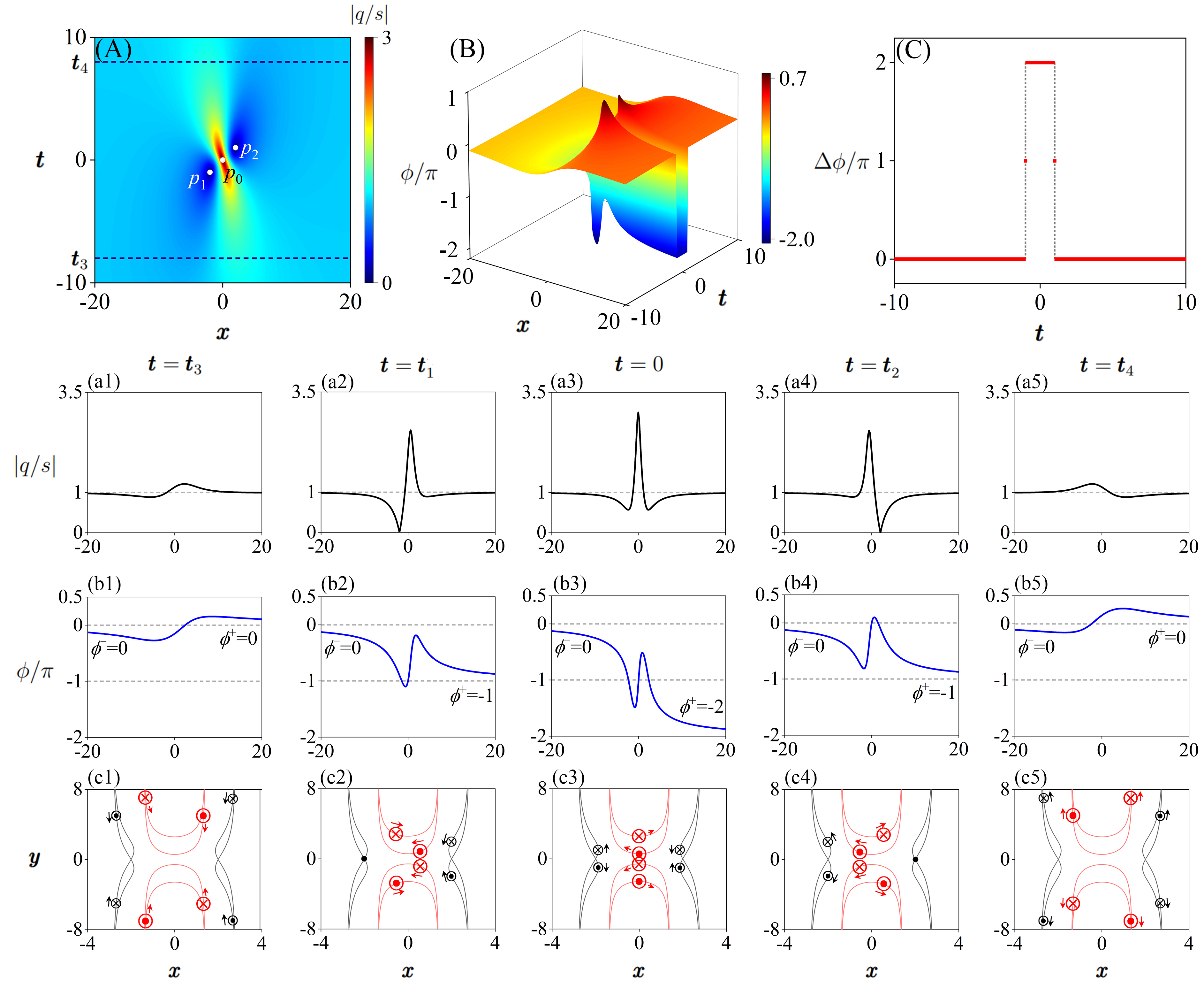}
\caption{Topological phase of ESRW. A, B, and C show the ESRW density, the spatiotemporal distribution of phase $\phi$, and the temporal variation of the phase shift $\Delta\phi$, respectively. Panels (a), (b), and (c) depict the density profile, the phase distribution, and the magnetic field of the RW at the corresponding times, respectively.  In (c), the trajectories of virtual monopoles with charges $\pm\frac{1}{2}$ (black curves) and $\pm\frac{3}{2}$ (red curves) are plotted, where the symbols $\bigodot$ and $\bigotimes$ denote positive and negative signs, respectively.
Parameters are: $s=\sqrt{{\frac{\sqrt{2}}{3}}},\mu=\sqrt{2}.$}\label{fig1}
\end{figure*}

Specifically, in the limit ${\pmb{t}} \to \pm \infty$, $\pmb{x}_\pm^{[1]}$ and $\pmb{x}_\pm^{[2]}$ describe four virtual monopoles that evolve from negative (or positive) infinity toward positive (or negative) infinity along the imaginary axis; in contrast, $\pmb{x}_\pm^{[3]}$ and $\pmb{x}_\pm^{[4]}$ represent another four monopoles that evolve from negative (or positive) infinity back to negative (or positive) infinity in the upper and lower half-planes. In particular,  $\im\pmb{x}_\pm^{[1]}=0$ and $\im\pmb{x}_\pm^{[2]}=0$ occur at  ${{\pmb{t}}}_1$ and ${{\pmb{t}}}_2$, respectively, which indicates that the two pairs of virtual monopoles described by $\pmb{x}_\pm^{[1]}$ and $\pmb{x}_\pm^{[2]}$ collide and merge on the real axis separately at ${\pmb{t}}_1$ (first valley emergence) and ${\pmb{t}}_2$ (second valley emergence), respectively. The monopole collision process can be well described by limit analysis on the vector potential.

As ${\pmb{t}}\rightarrow{\pmb{t}}_1$, we have $\pmb{x}_1=\pmb{x}_2$ and {${\pmb{y}}_1=-{\pmb{y}}_2\rightarrow 0$}. According to the dominated convergence theorem, for any test function $\varphi(\pmb{x})$,
$\lim\limits_{{\pmb{y}}_1\rightarrow \pm 0}\int_{-\infty}^{\infty}\frac{ {\pmb{y}}_1\varphi(\pmb{x})\dd\pmb{x}}{(\pmb{x}-\pmb{x}_1)^2+{\pmb{y}}_1^2}=\pm\pi\varphi(\pmb{x}_1)
	=\pm\pi\int_{-\infty}^{\infty}{\delta(\pmb{x}-\pmb{x}_1)\varphi(\pmb{x})\dd\pmb{x}}$ holds in the sense of distributions, where $\delta(\pmb{x})$ denotes the Dirac delta function. Consequently, the vector potential  $\mathbf{A}$ given in Eq.~\eqref{TVP_sc_rw} can be rewritten as
\begin{align*}
\lim_{{\pmb{t}}\to {\pmb{t}}_{1}\pm 0}\mathbf{A}
=&\mp\pi\delta(\pmb{x}-\pmb{x}_1)\mathbf{e}_{\pmb{x}}\!+\!\!\sum_{k \in \{3,5,7\}}\! \frac{c_k{\pmb{y}}_{k}\mathbf{e}_{\pmb{x}}}{(\pmb{x}\!-\!\pmb{x}_{k})^2\!+\!{\pmb{y}}_{k}^2},
\end{align*}
with $c_3 = 1$, $c_5 = -3$, $c_7 = 3$.
The second part has no singularities in the limit, and its integral over $\mathbb{R}$ is $-\pi$. {Besides, the total integral at ${\pmb{t}}={\pmb{t}}_1$ is given solely by the cumulative sum term since the singularity $\pmb{x}_\pm^{[1]}$ is absent at this instant, yielding the value $-\pi$.} Analogously, for ${\pmb{t}}\rightarrow {\pmb{t}}_2$, we have ${\pmb{x}}_3={\pmb{x}}_4$ and {${\pmb{y}}_3=-{\pmb{y}}_4\rightarrow0$}, leading to
\begin{align*}
\lim_{{\pmb{t}}\to {\pmb{t}}_2\pm 0}\mathbf{A}
=&\pm\pi\delta({\pmb{x}}-{\pmb{x}}_3)\mathbf{e}_{\pmb{x}}\!+\!\!\sum_{k\in \{1,5,7\}}\! \frac{c_k{\pmb{y}}_{k}\mathbf{e}_{\pmb{x}}}{(\pmb{x}\!-\!\pmb{x}_{k})^2\!+\!{\pmb{y}}_{k}^2},
\end{align*}
with $c_1=-1$. The cumulative sum term again contains no singularities in the limit, and its integral over $\mathbb{R}$  also equals $-\pi$. As a result, the phase shift $\Delta\phi$ can be obtained by
{\fontsize{10pt}{10.1pt}\selectfont
\begin{equation}
	\Delta\phi=-\int_{-\infty}^{\infty}\mathbf{A}\cdot\mathbf{e}_{\pmb{x}}\dd{\pmb{x}}=\left\{
	\begin{aligned}
		&0,&&{\pmb{t}}<{\pmb{t}}_1~~\mathrm{or}~~{\pmb{t}}>{\pmb{t}}_2,&\\
		&\pi,&&{\pmb{t}}={\pmb{t}}_1~~\mathrm{or}~~{\pmb{t}}={\pmb{t}}_2,&\\
		&2\pi,&&{\pmb{t}}\in({\pmb{t}}_1,{\pmb{t}}_2).&
	\end{aligned}
	\right.\nonumber
\end{equation}
}

Following the above analysis, we characterize the phase properties of the RW at several key moments to facilitate physical insight, namely, during its emergence (${\pmb{t}}_1, 0, {\pmb{t}}_2$) and before and after its appearance (${\pmb{t}}_3, {\pmb{t}}_4$), as depicted in Fig. \hyperref[fig1]{1}. The RW profiles are shown in Figs. \hyperref[fig1]{1(a1)-(a5)}, with the corresponding phase distributions presented in Figs. \hyperref[fig1]{1(b1)-(b5)}. The underlying topological magnetic fields encoded in these phase patterns are illustrated in Figs. \hyperref[fig1]{1(c1)-(c5)} based on the topological vector potential Eq.~\eqref{TVP_sc_rw}, where black and red curves trace the trajectories of virtual monopoles carrying charges $\pm1/2$ and $\pm3/2$, respectively, with $\odot$ ($\otimes$) denoting positive (negative) charges. At each instant, the position of each monopole is uniquely defined. Before the first valley emerges (${\pmb{t}}<{\pmb{t}}_1$), a pair of $\pm1/2$ monopoles in the left half-plane moves toward the real axis along opposite trajectories. At ${\pmb{t}}={\pmb{t}}_1$, they collide and merge on the real axis, inducing an abrupt phase shift $0\rightarrow\pi$. After this elastic collision, the pair reemerges and continues along their respective trajectories, further advancing the phase shift $\pi\rightarrow2\pi$ before the second valley appears. When RW hump emerges (${\pmb{t}}=0$), two $\pm1/2$ monopole pairs are symmetrically displaced about the ${\pmb{y}}$-axis, while two $\pm3/2$ monopole pairs lie on the ${\pmb{y}}$-axis. When the second valley emerges (${\pmb{t}}={\pmb{t}}_2$), another $\pm1/2$ pair in the right half-plane meets and merges on the real axis, producing an opposite effect that reduces the abrupt phase shift from $2\pi\rightarrow\pi$. Subsequently, these monopoles cross the real axis, reappear, and evolve further, eventually returning the total phase shift $\pi\rightarrow0$. The overall phase shift evolution is shown in Fig. \hyperref[fig1]{1(C)}. Abrupt phase jumps have also been observed in other nonlinear waves, such as in Akhmediev breathers, where the phase transitions from $-2\pi$ to $-\pi$ to $0$ as the modulation period changes \cite{Zhao4}, and in rational W-shaped solitons via the RW limit in the Hirota and Sasa-Satsuma models \cite{Zhao2}.

\section{Topological phase of vector rogue waves}\label{sec3}

Having revealed the rich topological charges and multiple monopole collision dynamics of scalar ESRWs, we now turn to the vector regime, where the interplay between components yields richer RW patterns, such as the anti-eye-shaped RW \cite{defocusing1,lingrw1,zhaorw1} and the FPRW \cite{zhaorw2,zhaorw3}. The coupled DNLS system includes high-order effects (e.g., self-steepening) can produce novel dynamics, such as fundamental RWs with ultrastrong amplitude enhancement \cite{chen1,chen2}. Then, we take the FP-ESRW as a prototypical example to investigate its topological properties.

The vector DNLS system can be described by \cite{CDNLS1}
\begin{subequations}\label{model2}
\begin{align}
\ii q_{1,t}+q_{1,xx}+\frac{2}{3}\ii[(|q_1|^2+|q_2|^2)q_1]_x=0,\\
\ii q_{2,t}+q_{2,xx}+\frac{2}{3}\ii[(|q_1|^2+|q_2|^2)q_2]_x=0.
\end{align}
\end{subequations}
Its vector RW solution can be obtained by conducting Darboux transformation \cite{Ling2,Lin} with seed solution $q_{0,i}=s_i\e^{\ii k_i[-x+(s_1^2+s_2^2-k_i)t]}~(i=1,2)$ and are expressed as:
\begin{align}\label{Vec_rwsol}
	q_i=\sqrt{\frac{3}{2}}\left[\ii\frac{ q_{0,i}}{ k_i}\frac{\chi_0(\chi_0^*+k_i)}{\chi_0^*(\chi_0+k_i)}\frac{1+4(\im\chi_0)^2\rho_i}{1+4(\im\chi_0)^2\rho_0}\right]_x,
\end{align}
where $u=\ii\left[-x+(s_1^2+s_2^2+2\re\chi_0)t+\frac{1}{\im\chi_0}+\im\frac{\chi_2}{\chi_1^{2}}\right]-2\im\chi_0t+\re\frac{\chi_2}{\chi_1^{2}}$, $\rho_i\!=\!\left({u}^*\!+\!\frac{1}{\chi_0^*\!+\!k_i}\right)\left({u}\!+\!\frac{1}{\chi_0}\!-\!\frac{1}{\chi_0\!+\!k_i}\right),$ and $\rho_0\!=\!{u}\left({u}^*\!+\!\frac{1}{\chi_0^*}\right).$
$\chi_j$ is given by the solution $\pmb\chi=\sum\limits_{j=0}^{\infty}\chi_j\epsilon^j$, and $\pmb\chi$ satisfies the equation
$\mathcal{P}(\pmb\chi;\eta)\equiv(\pmb\chi-2\eta)+\eta \left(\frac{s_1^2}{\pmb\chi+k_1}+\frac{s_2^2}{\pmb\chi+k_2}\right)=0,$
with $\eta=\eta_0+\epsilon^3$. We choose $\eta_0\in\mathbb{C}\backslash\mathbb{R}$ such that $\mathcal{P}(\pmb\chi;\eta_0)=(\pmb\chi-\chi_0)^3$. For a given $\eta_0$,
$\chi_0=\frac{2}{3}\eta_0-\frac{k_1+k_2}{3},s_1^2=\frac{(k_1+\chi_0)^3}{\eta_0(k_1-k_2)},s_2^2=\frac{(k_2+\chi_0)^3}{\eta_0(k_2-k_1)}$
and $k_i$ is taken the form $k_i=\frac{2}{3}\im\eta_0\left[\cot(\omega_i)+\cot(\omega_1)+\cot(\omega_2)\right]-2\re\eta_0,$
with $\omega_1=\frac{\arg(\eta_0)}{3},\omega_2=\frac{\pi+\arg(\eta_0)}{3}$. For the purpose of precise analytical analysis, we carry out the coordinate transformation $(x,t)\mapsto (\pmb{x},\pmb{t})$ as
\small
\begin{align*}
\pmb{x}&=-x+(s_1^2+s_2^2+2\re\chi_0)t+\frac{1}{\im\chi_0}+\im\frac{\chi_2}{\chi_1^{2}}+C_1(\eta_0),
\end{align*}
\begin{align*}
\pmb{t}&=2\im\chi_0t-\re\frac{\chi_2}{\chi_1^{2}}+C_2(\eta_0),
\end{align*}
\normalsize
where $C_1(\eta_0)$ and $C_2(\eta_0)$ are two constants to control the center of RW density located at $(0,0)$ on the $(\pmb{x}, \pmb{t})$-plane. In what follows, we focus on the FP-ESRW as an example, taking $\eta_0=1+\ii$ and $C_1(\eta_0)=-C_2(\eta_0)=-\frac{1}{4\im\chi_0}$. To investigate the phase properties of the vector RW, we analyze this solution within the framework of the topological vector potential described above.

\subsection{Simultaneous merger of two monopole pairs on the imaginary axis in the FPRW component}\label{SG1}

\begin{figure*}[htpb]	
	\centering
	\includegraphics[width=174mm]{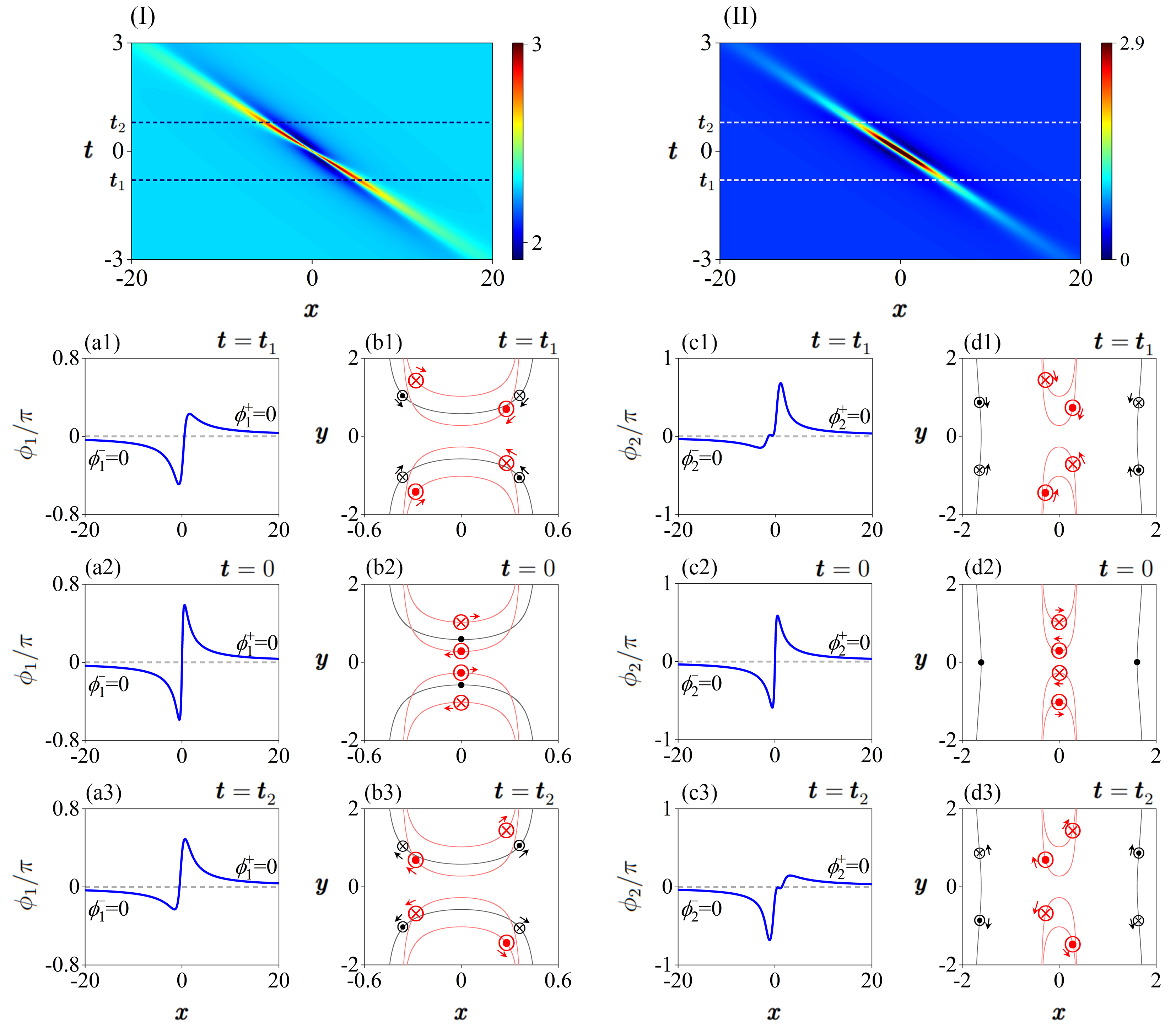}
	\caption{The topological phase of FP-ESRW. (I), (II) represent the intensity distribution of FPRW and ESRW, respectively. Panels (a)-(b) and (c)-(d) depict the phase distribution and the associated magnetic field at selected times for the FPRW and ESRW, respectively. The trajectories of virtual monopoles with charges $\pm\frac{1}{2}$ (black curves) and $\pm\frac{3}{2}$ (red curves) are plotted in  (c) and (d), where the symbols $\bigodot$ and $\bigotimes$ denote positive and negative signs, respectively. Parameters are  ${\pmb{t}}_1=-{\pmb{t}}_2=-0.8.$}\label{fig2}
\end{figure*}
The density profile of component $q_1$ exhibits an FPRW pattern (Fig. \hyperref[fig2]{2(I)}), with the phase gradient flow given by ${\phi}_{1,{\pmb{x}}}=\frac{G_1({\pmb{x}},{\pmb{t}})}{\prod_{i=1}^{4}[{\pmb{x}}-{\pmb{x}}_{1,+}^{[i]}({\pmb{t}})][{\pmb{x}}-{\pmb{x}}_{1,-}^{[i]}({\pmb{t}})]}$,
where $G_1({\pmb{x}},{\pmb{t}})$ is provided in Appendix \ref{app2}, and the eight singularities are explicitly expressed as
$\pmb{x}_{1,\pm}^{[1]}=\pm\frac{\sqrt{m_1-\ii m_2}}{4\sqrt{2}}$, $\pmb{x}_{1,\pm}^{[2]}=\pm\frac{\sqrt{m_1+\ii m_2}}{4\sqrt{2}}$,
$\pmb{x}_{1,\pm}^{[3]}=\frac{1}{8}(\pm 3\ii +m_3)$, and $\pmb{x}_{1,\pm}^{[4]}=\frac{1}{8}(\pm 3\ii-m_3)$, with
$m_1=36-27\sqrt{3}-32{\pmb{t}}^2$, $m_2=24\sqrt{6}|{\pmb{t}}|\sqrt{2-\sqrt{3}}$ and $m_3=\sqrt{-27\pm 48\ii{\pmb{t}}-64{\pmb{t}}^2}$.
The density of FPRW is $|q_1|^2=\frac{3}{2}s_1^2\frac{\prod_{i=1}^{2}[{\pmb{x}}-{\pmb{x}}_{1,+}^{[i]}({\pmb{t}})][{\pmb{x}}-{\pmb{x}}_{1,-}^{[i]}({\pmb{t}})]}
{\prod_{i=3}^{4}[{\pmb{x}}-{\pmb{x}}_{1,+}^{[i]}({\pmb{t}})][{\pmb{x}}-{\pmb{x}}_{1,-}^{[i]}({\pmb{t}})]}.$ Remarkably, the correspondence between density zeros/poles and phase singularities persists from the scalar case to the vector FPRW component. We label these eight singularities as $z_n = \pmb{x}_n + \ii \pmb{y}_n$ for $n = 1, \cdots, 8$, with $\pmb{x}_{1,-}^{[i]}$ and $\pmb{x}_{1,+}^{[i]}$ ($i = 1, \cdots, 4$) serving as the pair for $n = 2i-1$ and $n = 2i$, where $\pmb{x}_n = \mathrm{Re}[\pmb{x}_{1,\pm}^{[i]}]$ and $\pmb{y}_n = \mathrm{Im}[\pmb{x}_{1,\pm}^{[i]}]$.
In the topological magnetic field, these singularities appear as four pairs of virtual monopoles. The monopoles denoted $\pmb{x}_{1,\pm}^{[1]}$ and $\pmb{x}_{1,\pm}^{[2]}$ each carry a quantized flux {$\pm\pi$}, while those denoted $\pmb{x}_{1,\pm}^{[3]}$ and $\pmb{x}_{1,\pm}^{[4]}$ each carry a quantized flux $\pm3\pi$. Thus, the topological vector potential $\mathbf{A}_1$ of the FPRW component takes the same form as in Eq.~\eqref{TVP_sc_rw}.

In particular, ${\pmb{x}}_{1,\pm}^{[1]}$ and ${\pmb{x}}_{1,\pm}^{[2]}$ define two pairs of conjugate singularities in the complex plane, and the paired monopoles in the upper (or lower) half-plane share the same evolution path, as shown in Fig.~\hyperref[fig2]{2(b1)-(b3)} with black curves. Moreover, at the density center ($\pmb{t}=0$), we have ${\pmb{x}}_{1,\pm}^{[1]}={\pmb{x}}_{1,\pm}^{[2]}$ and $\left(\prod_{k=1}^{2}[{\pmb{x}}-{\pmb{x}}_{1,+}^{[k]}(0)][{\pmb{x}}-{\pmb{x}}_{1,-}^{[k]}(0)]\right)\Big|G_1({\pmb{x}},0)$, indicating that  the two corresponding pairs of monopoles simultaneously collide and merge in the upper and lower half-planes along the imaginary axis, respectively, as depicted by the black dots in Fig.~\hyperref[fig2]{2(b2)}. At this instant, the topological vector potential $\mathbf{A}_1$ is reduced as
{\fontsize{10pt}{11pt}\selectfont
\begin{align}\label{A11_t0}
	\mathbf{A}_1|_{\pmb{t}=0}=&+\frac{3}{2}\sum\limits_{n=5,8}\frac{
		(\pmb{x}-\pmb{x}_n)\mathbf{e}_{\pmb{y}}-({\pmb{y}}-{\pmb{y}}_n)\mathbf{e}_{\pmb{x}}}{(\pmb{x}-\pmb{x}_n)^2+({\pmb{y}}-{\pmb{y}}_n)^2}\nonumber\\
	&-\frac{3}{2}\sum\limits_{n=6,7}\frac{
		(\pmb{x}-\pmb{x}_n)\mathbf{e}_{\pmb{y}}-({\pmb{y}}-{\pmb{y}}_n)\mathbf{e}_{\pmb{x}}}{(\pmb{x}-\pmb{x}_n)^2+({\pmb{y}}-{\pmb{y}}_n)^2}.
\end{align}}

\noindent 
When these two pairs of monopoles pass through ${\pmb{t}}=0$, their subsequent evolution is described by the transition rule ${\pmb{x}}_{1,-}^{[1]}\leftrightarrow {\pmb{x}}_{1,-}^{[2]}$, ${\pmb{x}}_{1,+}^{[2]}\leftrightarrow {\pmb{x}}_{1,+}^{[1]}$, which ensures the continuity of evolution for the virtual monopoles in the topological vector potential. In contrast, the two pairs of monopoles denoted ${\pmb{x}}_{1,\pm}^{[3]}$ and ${\pmb{x}}_{1,\pm}^{[4]}$ always evolve along four independent trajectories, as shown by the red curves in \hyperref[fig2]{Fig. 2(b1)-(b3)}. Similar to the scalar case, a branch exchange is required for ${\pmb{x}}_{1,\pm}^{[3]}$ and ${\pmb{x}}_{1,\pm}^{[4]}$ as $\pmb{t}$ crosses 0 to maintain continuous evolution. Because monopoles in the same half-plane carry opposite charges, their contributions to the overall phase shift cancel exactly, yielding $\int_{-\infty}^{\infty}\textbf{A}_1({\pmb{x}},{\pmb{y}}=0)\dd{\pmb{x}}=0$ and hence $\Delta\phi_1=0$ for all times.

To describe topological feature of FPRW, we present the asymptotic phase behavior of the FPRW at $\pmb{t}_1=-0.8$, $\pmb{t} = 0$, and $\pmb{t}_2 = 0.8$, phase variations are shown in Figs. \hyperref[fig2]{2(a1)-(a3)}. The asymptotic phases $\phi_1^{-}$ (as $\pmb{x}\to-\infty$) and $\phi_1^{+}$ (as $\pmb{x}\to+\infty$) satisfy $\phi_1^{-}-\phi_1^{+}=0$. The nontrivial phase variations are confined to regions where the density exhibits significant local structure. Figures \hyperref[fig2]{2(b1)-(b3)} present the corresponding trajectories and instantaneous positions of all eight virtual monopoles in the topological magnetic field, with arrows indicating their subsequent evolution directions. When the evolution reaches the density center at $\pmb{t} = 0$, two pairs of monopoles with charge $\pm 1/2$ collide on the imaginary axis, while two pairs of monopoles with charge $\pm 3/2$ simultaneously appear on the imaginary axis. This topological phase picture contrasts sharply with the scalar ESRW (Sec.~\ref{sec2}) and, more strikingly, with the upcoming vector ESRW component, where collisions occur on the real axis. This reveals the rich topological diversity of the vector DNLS system.

\subsection{Simultaneous merger of two monopole pairs on the real axis in the ESRW component}\label{SG2}

 For component $q_2$, Eq.~\eqref{Vec_rwsol} produces an ESRW solution (Fig.~\hyperref[fig2]{2(II)}), but unlike the scalar case in Sec.~\ref{sec2}, its peak and two valleys coexist at $\pmb{t}=0$.
The vector ESRW density function is $|q_2|^2=\frac{3}{2}s_2^2\frac{\prod_{i=1}^{2}[{\pmb{x}}-{\pmb{x}}_{2,+}^{[i]}({\pmb{t}})][{\pmb{x}}-{\pmb{x}}_{2,-}^{[i]}({\pmb{t}})]}
{\prod_{i=3}^{4}[{\pmb{x}}-{\pmb{x}}_{2,+}^{[i]}({\pmb{t}})][{\pmb{x}}-{\pmb{x}}_{2,-}^{[i]}({\pmb{t}})]}.$ The corresponding phase gradient flow is given as $\phi_{2,{\pmb{x}}}=\frac{G_2({\pmb{x}},{\pmb{t}})}{\prod_{i=1}^{4}[{\pmb{x}}-{\pmb{x}}_{2,+}^{[i]}({\pmb{t}})][{\pmb{x}}-{\pmb{x}}_{2,-}^{[i]}({\pmb{t}})]}$,
where $G_2({\pmb{x}},{\pmb{t}})$ is given in Appendix \ref{app2}. Remarkably, the same correspondence between phase-gradient singularities and density zeros/poles observed in $q_1$ component persists in $q_2$ component. The eight singularities of $\phi_{2,{\pmb{x}}}$ are ${\pmb{x}}_{2,\pm}^{[1]}=\pm\frac{\sqrt{m_4-\ii m_5}}{4\sqrt{2}}$, ${\pmb{x}}_{2,\pm}^{[2]}=\pm\frac{\sqrt{m_4+\ii m_5}}{4\sqrt{2}}$, ${\pmb{x}}_{2,\pm}^{[3]}=\frac{1}{8}(\pm 3\ii +m_3)$, and ${\pmb{x}}_{2,\pm}^{[4]}=\frac{1}{8}(\pm 3\ii -m_3)$, with $m_4=36+27\sqrt{3}-32{\pmb{t}}^2$, $m_5=24\sqrt{6}|{\pmb{t}}|\sqrt{2+\sqrt{3}}$. These define complex coordinates $z_n = \pmb{x}_n + \ii \pmb{y}_n$ for $n = 1, \cdots, 8$  via the same pairing rule: for each $i=1,\cdots,4$, $\pmb{x}_{2,-}^{[i]}$ and $\pmb{x}_{2,+}^{[i]}$ correspond to $n=2i-1$ and $n=2i$, with $\pmb{x}_n = \mathrm{Re}[\pmb{x}_{2,\pm}^{[i]}]$, $\pmb{y}_n = \mathrm{Im}[\pmb{x}_{2,\pm}^{[i]}]$. They constitute four pairs of virtual monopoles within the topological vector potential $\mathbf{A}_2$, whose form is identical to that in Eq.~\eqref{TVP_sc_rw}.

We note that, unlike the FPRW, the paired monopoles described by the conjugate singularities ${\pmb{x}}_{2,\pm}^{[1]}$ (${\pmb{x}}_{2,\pm}^{[2]}$) follow the same evolution path in the left (or right) half-plane (shown in Fig.~\hyperref[fig2]{2(d1)-(d3)} with black curves), while those from ${\pmb{x}}_{2,\pm}^{[3]}$ and ${\pmb{x}}_{2,\pm}^{[4]}$ evolve along four independent tracks (shown in Fig.~\hyperref[fig2]{2(d1)-(d3)} with red curves). Particularly, as $\pmb{t}\rightarrow 0$, we have $\pmb{x}_1=\pmb{x}_2=-\pmb{x}_3=-\pmb{x}_4$ and ${\pmb{y}}_1=-{\pmb{y}}_2\rightarrow 0, {\pmb{y}}_3=-{\pmb{y}}_4\rightarrow 0$. In this case, the topological vector potential $\mathbf{A}_2$ (Eq.~\eqref{TVP_sc_rw}) is reduced as
\begin{align*}
	\lim_{{\pmb{t}}\rightarrow\pm 0}\mathbf{A_2}=&\pm\pi\delta({\pmb{x}}-{\pmb{x}}_1)\mathbf{e}_{\pmb{x}}\mp\pi\delta({\pmb{x}}-{\pmb{x}}_3)\mathbf{e}_{\pmb{x}}\\
	&+\lim\limits_{{\pmb{t}}\rightarrow 0}\left[\frac{3{\pmb{y}}_{5}\mathbf{e}_{\pmb{x}}}{({\pmb{x}}-{\pmb{x}}_{5})^2+{\pmb{y}}_{5}^2}-\frac{3{\pmb{y}}_{7}\mathbf{e}_{\pmb{x}}}{({\pmb{x}}-{\pmb{x}}_{7})^2+{\pmb{y}}_{7}^2}\right],
\end{align*}
where the third term has no singularities in the limit, and its integral over $\mathbb{R}$ is $0$ since $\mathrm{sgn}({\pmb{y}}_5)=\mathrm{sgn}({\pmb{y}}_7)$. More remarkably, at ${\pmb{t}}=0$, we have ${\pmb{x}}_{2,\pm}^{[1]}={\pmb{x}}_{2,\pm}^{[2]}=\pm\frac{\sqrt{36+{27\sqrt{3}}}}{4\sqrt{2}}$. This indicates that these two pairs of monopoles collide and merge symmetrically on the real axis. Because such merger always produces opposite effects, it does not alter the final outcome or affect the phase shift $\Delta\phi_2$, as shown in Figs.~\hyperref[fig2]{2(c2)} and \hyperref[fig2]{2(d2)}. As a consequence, $\int_{-\infty}^{\infty}\textbf{A}_2({\pmb{x}},{\pmb{y}}=0)\dd{\pmb{x}}=0$ and the phase shift $\Delta{\phi}_2=0$ for all times.

The phase characteristics of this vector ESRW are further illustrated in the right panel of Fig.~\hyperref[fig2]{2}. The phase distribution curves at $\pmb{t} = \pmb{t}_1 = -0.8$, $\pmb{t} = 0$, and $\pmb{t} = \pmb{t}_2 = 0.8$ are shown in Fig.~\hyperref[fig2]{2(c1)-(c3)}, respectively, where $\phi_2^\pm=\lim\limits_{{\pmb{x}}\rightarrow\pm\infty}{\phi_2}({\pmb{x}})/\pi$. The trajectories and corresponding-time positions of all eight virtual monopoles are displayed in Fig.~\hyperref[fig2]{2(d1)-(d3)}, with arrows indicating their future evolution. In Fig.~\hyperref[fig2]{2(d2)}, the virtual monopoles with $\pm1/2$ charge disappear from the topological vector potential, as they merge on the real axis at ${\pmb{t}}=0$.

These results reveal a key difference between vector and scalar ESRWs (cf. Fig.~\hyperref[fig2]{2(d1)-(d3)} and Fig.~\hyperref[fig1]{1(c1)-(c5)}). In the scalar case, two $\pm1/2$ monopole pairs move asymmetrically: one pair merges and reemerges with opposite charge while the other remains unaffected, causing a phase shift jump until the second pair undergoes a similar merger. In the vector ESRW, the two $\pm1/2$ pairs move symmetrically and merge simultaneously with the appearance of the two valleys, so their contributions to $\Delta\phi_2$ cancel at all times. Thus, whether the phase shift changes depends on whether the monopole mergers on the real axis are staggered in time. Additionally, under other parameter choices in Eq.~\eqref{Vec_rwsol}, the vector system admits an anti-eye-shaped RW. This solution contains four virtual monopole pairs: two with winding numbers $\pm1/2$ and two with $\pm3/2$, each following its own distinct trajectory. In contrast to the FPRW and the ESRW described above, the monopoles in the anti-eye-shaped RW never collide or merge (their trajectories are omitted here).

In previous studies, phase-gradient singularities in nonlinear waves have yielded only the elementary monopole charge $\pm1/2$ in the complex plane. The self-steepening effect fundamentally alters this paradigm: as demonstrated above, both scalar and vector RWs in the DNLS system exhibit a higher charge $\pm3/2$, absent in previous works. Beyond RWs, solitons offer a richer platform. Unlike RWs, which contain only a finite number of monopole pairs, solitons support an infinite, periodically distributed array of virtual monopoles in the complex plane \cite{Zhao3}. This periodicity naturally invites the question of whether even higher charges can emerge. Our systematic study of BSs under the self-steepening effect reveals the emergence of higher quantized charges: they not only retain the $\pm3/2$ charge but also give rise to monopoles with a charge of $\pm5/2$, corresponding to third-order poles of the DHBS density in vector system. In what follows, we investigate the topological phase properties of BSs in scalar and vector DNLS systems, revealing a hierarchy of magnetic monopole charges intricately linked to the order of zeros and poles of the soliton density.

\section{$3/2$ topological charge in scalar bright solitons}\label{sec4}
\begin{figure}[!t]
\centering		
\includegraphics[width=80mm]{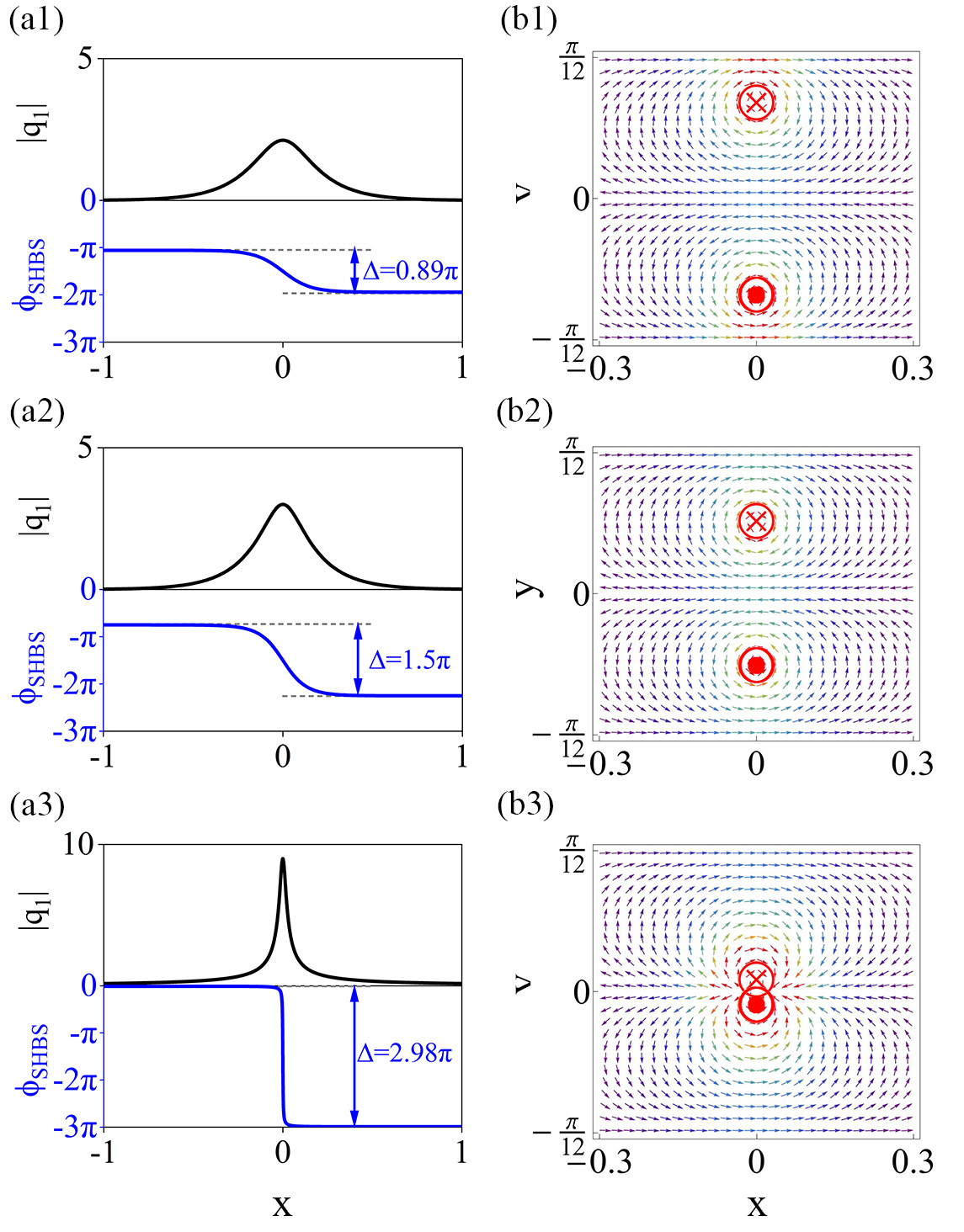}
\caption{Phase properties of SHBS. Panels (a1)-(a3) show the phase distributions (blue curve) alongside the intensity profiles (black curve). Panels (b1)-(b3) exhibit the corresponding topological vector potential, where monopoles with positive and negative charges are denoted by $\bigodot$ and $\bigotimes$, respectively. The parameters are set as follows: (1) $v_1=-9$; (2) $v_2=0$; (3) $v_1=599.94$. Other parameters are fixed as $\omega_1=6.$}\label{fig3}
\end{figure}

We first study the phase properties of scalar BSs. By performing the Darboux transformation \cite{Ling2}, the BS solution to scalar  system \eqref{model} can be expressed as
\begin{align}\label{solution1}
q&=\frac{{4\omega_1}\left[a_1\cosh(\kappa_1)-\ii b_1\sinh(\kappa_1)\right]^3\e^{\ii R_1}}{\left[(a_1^2+ b_1^2)\cosh(2\kappa_1)+ a_1^2-b_1^2\right]^2},
\end{align}
with $\kappa_1=\omega_1(x-v_1t), \omega_1=6a_1b_1, v_1=6(b_1^2-a_1^2)$, and $R_1=\frac{v_1}{2}x+\left(\omega_1^2-\frac{v_1^2}{4}\right)t$. Here, $v_1$ and $\omega_1$ represent the velocity and width parameter of the BS, respectively. The density profile of this solution exhibits the typical SHBS structure. Then, we focus on  the topological phase properties by analyzing the exact SHBS solution.

The geometric phase of SHBS can be defined as
$\phi\equiv\arctan\frac{f_2(f_2^2-3f_1^2)}{f_1(f_1^2-3f_2^2)}+n\pi$, with $f_1=a_1\cosh \kappa_1,f_2=b_1\sinh \kappa_1$. It satisfies
$\lim\limits_{x\rightarrow-\infty}\phi=-\lim\limits_{x\rightarrow+\infty}\phi$. Its gradient flow is
$\phi_x\!=\!-\frac{1}{2}|q|^2\!=\!-\frac{2\omega_1^2}{a_1^2\!+\!b_1^2}\frac{\e^{\eta}}{\e^{2\eta}\!+\!\mu\e^{\eta}\!+\!1}$,
where $\eta=2\omega_1 x, \mu=\frac{4}{1+u^2}-2, u=\frac{a_1}{b_1}$. This relation clearly reveals a direct correspondence between the singularities in the phase gradient and the density poles of the soliton. The phase jump is defined as $\Delta\phi\equiv-\int_{-\infty}^{\infty}\phi_x\dd x=3\left[\frac{\pi}{2}-\arctan\left(\frac{u^2-1}{2|u|}\right)\right]$, which yields $\Delta\phi \in (0, 3\pi)$. For example, we plot the phase variations of SHBS along with their intensity distributions for different velocities in Fig.~\hyperref[fig3]{\ref{fig3}(a1)-\ref{fig3}(a3)} (with $\omega_1$ fixed at $6$), corresponding to velocities $v_1 = -9$, $v_1 = 0$, and $v_1 = 599.94$, respectively. Here, the black curve represents the density profile, while the blue curve shows the phase distribution. The phase jump increases with velocity. To visualize these phase shift variations more comprehensively, we plot their dependence on the $(v_1, \omega_1)$ plane in Fig.~\ref{fig4}, which reveals that $\Delta\phi$ is determined by the velocity-to-width ratio.

The scalar SHBS solution exhibits a striking feature in its phase shift, which spans an unprecedented range of $(0,3\pi)$ and attains $\Delta\phi = 3\pi/2$ in the stationary case---a value that may be particularly significant for topological characterization. This remarkable behavior sharply contrasts with known phase characteristics in other soliton systems. Dark solitons, for example, typically display their maximum phase shift in the static limit, with the shift decaying to zero at maximum velocity \cite{Kivshar,Kivshar1,Zhao3,Qin2}. Achieving the full $(0, 3\pi)$ range would require triple-valley dark solitons in Manakov systems, which undergo triple-step structure of the phase jump \cite{Qin2}. The BSs in generalized scalar Chen-Lee-Liu systems exhibit phase shifts, which are strictly bounded by $\pi$ \cite{Zhao1}. In general, a single-step phase shift can give rise to a maximum phase shift of $\pi$, corresponding to the quantized flux of an elementary magnetic monopole with a charge $\pm1/2$ in the associated topological vector potential \cite{Kivshar,Kivshar1,Zhao3,Zhao1,Qin3}. Notably, the SHBS studied here exhibits a single-step structure with a $3\pi$ phase jump limit, suggesting that this SHBS possesses exceptional topological properties.

\begin{figure}[!t]
\centering		
\includegraphics[width=60mm]{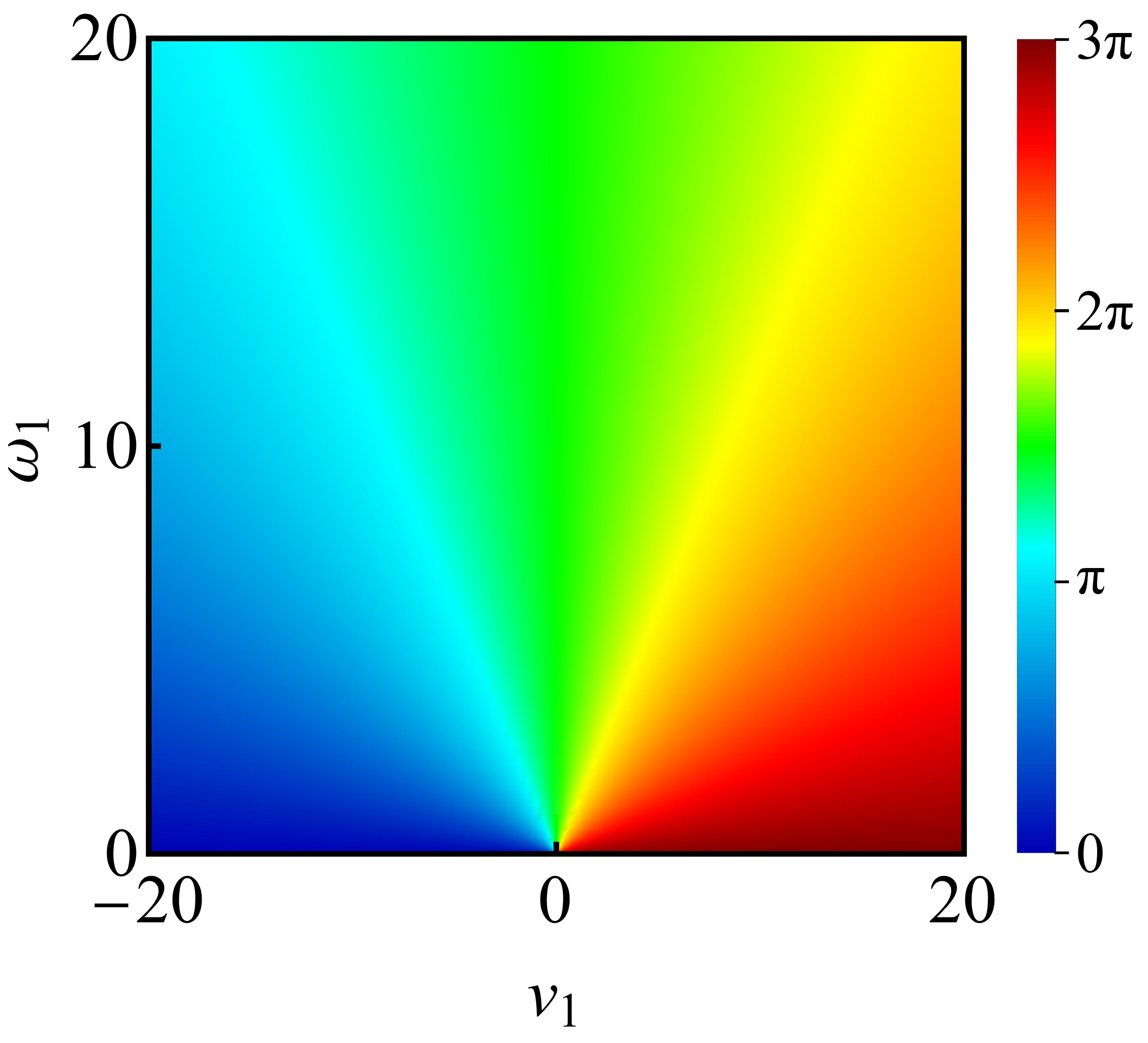}
\caption{The phase shift of single-hump bright soliton in the $(v_1,w_1)$ plane as $t=0$.}\label{fig4}
\end{figure}

By extending the real coordinate $x$ to the complex variable $z = x + \ii y$, the phase gradient $\phi_x$ maps to its complex counterpart $\phi_z$. The singularities of $\phi_z$ are given by $z_n^\pm=\frac{1}{2\omega_1}\left[\pm\ii\arg({u^2-1+\ii 2|u|})+\ii2\pi n\right]$, where the $\pm$ signs are correlated and $n$ is an integer. These singularities correspond exactly to the positions of virtual magnetic monopoles carrying a quantized flux of $\pm3\pi$, i.e., charges $\sigma = \pm 3/2$ in the topological vector potential of the SHBS:
\begin{align}
\mathbf{A}=\sum_{n}\pm\frac{3}{2}\frac{(x-x_n)\mathbf{e}_y-(y-y_n)\mathbf{e}_x}{(x-x_n)^2+(y-y_n)^2}.
\end{align}
The period of the virtual monopole distribution is $T=\frac{\pi}{\omega_1}$. Each $z_n^\pm$ is a simple pole of both the phase gradient $\phi_z$ and the density function $|q|^2$ in the complex plane. This reveals that the associated monopole charges can attain the high value of $\pm \frac{3}{2}$. Such high charges are markedly different from the $\pm 1/2$ charges typical of previously studied solitons, such as dark solitons \cite{Zhao3,Qin3},  W-shaped solitons in the Sasa-Satsuma and Hirota models \cite{Zhao2}, and  BS of the Chen-Lee-Liu equation \cite{Zhao1}. To illustrate this high charge dynamics,  Fig. \hyperref[fig3]{\ref{fig3}(b1)-(b3)} present the topological vector potentials of the SHBS under increasing velocity (fixed $w_1=6$), corresponding to the black curves in Fig. \hyperref[fig3]{\ref{fig3}(a1)-(a3)}. Over one period along the imaginary axis, virtual monopoles are distributed symmetrically about $y = 0$ within $[-\pi/12, \pi/12]$. As velocity increases, the decreasing monopole separation leads to a larger phase jump. The maximum $3\pi$ phase jump occurs precisely when monopoles collide and merge.

In particular, there are two monopoles described by $z_0^+$ and $z_0^-$ symmetrically distributing in a period range $[-\frac{T}{2},\frac{T}{2}]$ on the imaginary axis. Our analysis demonstrates that the phase shift can also be expressed in terms of the period, topological charge, and singularities, as
\begin{align}\label{phi2}
\Delta\phi\!=-\!\frac{2\pi}{T}\!\left[\!\sigma_{z_0^+}\!\left(\!\frac{T}{2}\!-\!\im z_0^+\right)\!+\!\sigma_{z_0^-}\!\left(\!-\frac{T}{2}\!-\!\im z_0^-\right)\!\right],
\end{align}
where $\sigma_{z_0^\pm}$ are charges of virtual monopole corresponding to $z_0^\pm$. This formula reflects that the closer a virtual magnetic monopole approaches the real axis, the greater its contribution to the phase jump (Ref.~\cite{Zhao5}). Topological vector potential theory gives $\Delta\phi=-\int_{-\infty}^\infty \mathbf{A}(x,y=0)\cdot\mathbf{e}_x\dd x$, so Eq.~\eqref{phi2} should be obtainable by integrating this potential. However, because the series  $\mathbf{A}(x,y=0)\cdot\mathbf{e}_x$ converges only conditionally to $\phi_x$ and its summation order lacks a well-defined sequence, we have only proven that under the Cauchy principal value of the periodic bracketed summation form, the integral converges to an expression of the form in Eq.~\eqref{phi2}. See Appendix~\ref{app0} for the proof.

\section{various topological charge in double-hump bright solitons}\label{sec5}

We recently demonstrated DHBS in the vector DNLS system \cite{Qin}. The exotic topological phase properties of RWs and scalar BS (revealed above) naturally raise the question of whether vector DHBS exhibit richer topological structures. Our analysis confirms this expectation, revealing a hierarchy of quantized charges: $\pm1/2$, $\pm3/2$, and notably $\pm5/2$. The highest charge $\pm5/2$ corresponds to third-order poles of the DHBS density, a feature absent in the scalar case.

The DHBS solution to the vector DNLS equation \eqref{model2} is obtained via the binary Darboux transformation \cite{Ling}. For the purposes of our analysis, its exact expressions are reformulated as follows:
\begin{subequations}\label{solution2}	
\begin{align}
q_{1}&=c_1\frac{V_{1}N^3}{H^2}\e^{\ii [\frac{v}{2}x+\left(\omega_1^2-\frac{v^2}{4}\right)t]},\\
q_{2}&=c_2\frac{V_{2}N^3}{H^2}\e^{\ii [\frac{v}{2}x+\left(\omega_2^2-\frac{v^2}{4}\right)t]},
\end{align}
\end{subequations}
where $v=-6\mathrm{Re}(\lambda_i^2)$ is velocity, $\omega_i=3\mathrm{Im}(\lambda_i^2)$ are two width parameters, $\lambda_i=a_1+\ii b_i$ is spectral parameter and other exact expressions are given in Appendix \ref{app2}.
This solution can generate either symmetric or asymmetric DHBS profiles by adjusting parameters. As examples, we show three types of symmetric DHBS profiles in Fig. \ref{fig5}: the first type in panels \hyperref[fig5]{\ref{fig5}(a1)} and \hyperref[fig5]{\ref{fig5}(a3)}, the second in \hyperref[fig5]{\ref{fig5}(b1)} and \hyperref[fig5]{\ref{fig5}(b3)}, and the third in \hyperref[fig5]{\ref{fig5}(c1)} and \hyperref[fig5]{\ref{fig5}(c3)}, all plotted with black curves. Then, we perform the  topological potential theory {on} DHBS to uncover more interesting topological properties.

\subsection{Topological vector potential with charges $\pm1/2$, $\pm3/2$, and $\pm5/2$} \label{SF}

From vector DHBS solution Eq.~\eqref{solution2}, the phase of each component is $\phi_i(x)= \arctan \frac{\im(V_i N^3)}{\re(V_i N^3)} + n\pi$ with $i = 1,2$ and integer $n$. The value of $n$ is chosen (based on the phase gradient direction) to ensure a continuous phase profile. This leads to the relation $\lim\limits_{x \to -\infty} \phi_i(x) = -\lim\limits_{x \to +\infty} \phi_i(x)$. Without loss of generality, we assume $\omega_1 > \omega_2 > 0$ and $t=0$ for convenience. Let $\omega = \frac{\omega_2}{\omega_1}$; then $\omega \in (0,1)$ and $\varpi = \frac{1-\omega}{1 + \omega} \in (0,1)$. The equation $\re(V_i N^3) = 0$ may possess roots, yet the phase gradient $F_i(x)=\frac{\dd \phi_i(x)}{\dd x}$  remains continuously extendable. Given the complexity of $\phi_i$, we first consider the static case ($v = 0$), where phase gradient takes the compact form: $F_1(x)=-2\frac{\mathbf{\Lambda_1}\mathbf{R_1}}{S_1\mathbf{\Gamma}\mathbf{R}}\equiv\frac{\varLambda_1(x)}{\varGamma_1(x)}$,
$F_2(x)=-2\frac{\mathbf{\Lambda_2}\mathbf{R_2}}{S_2\mathbf{\Gamma}\mathbf{R}}\equiv\frac{\varLambda_2(x)}{\varGamma_2(x)}$. The explicit definitions of $\pmb{R}_i$, $\pmb{S}_i$, $\pmb{\varLambda}_i$, $\pmb{\Gamma}$ and $\pmb{R}$ are provided in Appendix \ref{app2}. Meanwhile, the density function for the DHBS is given by $|q_1(x)|^2=12|\beta_2|^2\omega_2\frac{S_1}{\mathbf{\Gamma}\mathbf{R}}$ and
$|q_2(x)|^2=12|\gamma_1|^2\omega_1\frac{S_2}{\mathbf{\Gamma}\mathbf{R}}$. Comparing $F_i(x)$ with $|q_i(x)|^2$, we note that the singularities of phase gradient correspond to the zeros of the density function originating from $S_i$, together with the poles of the density functions derived from $\mathbf{\Gamma}\mathbf{R}$ and $S_i/(\mathbf{\Gamma}\mathbf{R})$. This connection is key to relating the zeros and poles of the density functions to the singularities of the vector potential in the complex plane, thus uncovering the topological properties that underlie the phase shift.

We now analytically continue the phase gradient $F_i(x)=\frac{\dd\phi_i(x)}{\dd x}$ and the density $|q_i(x)|^2$ to the complex plane by simply replacing $x$ with $z=x+\ii y$, yielding $F_i(z)$ and $|q_i(z)|^2$, respectively. Accordingly, the topological vector potential for a DHBS in two components is given by $\mathbf{A}_i=\mathrm{Re}[\frac{\partial\phi_i(z)}{\partial z}]\textbf{e}_x-\mathrm{Im}[\frac{\partial\phi_i(z)}{\partial z}]\textbf{e}_y$. Its general form can be expressed as
\begin{align}\label{A_i1}
\mathbf{A}_i=\sum_{n}\frac{\sigma_{i,n} [(x-x_{i,n})\mathbf{e}_y-(y-y_{i,n})\mathbf{e}_x]}{(x-x_{i,n})^2+(y-y_{i,n})^2},
\end{align}
where $\sigma_{i,n}$ represents the topological charge of the corresponding virtual monopole determined by singularity $z_{i,n}=x_{i,n}+\ii y_{i,n}$. The distribution of virtual monopoles for the topological vector potential $\mathbf{A}_i$ can be obtained by identifying the singularities of $F_i(z)$ in the complex plane. The phase shift can be described by the integration of the vector potential $\mathbf{A}_i$ along the real $x$-axis.

Thereafter, we focus on the singularity property of the vector potential field $\mathbf{A}_i$ based on $F_i(z)$. Since the expression for $F_i(z)$ is in a fractional form, the analysis of its singularities needs to consider these two cases: (1) $\varGamma_i(z)=0$ and $\varLambda_i(z)\neq 0$. Singular points $z_{i,n}$ satisfying this condition are necessarily poles of $F_i(z)$. In particular, if $\partial_z\varGamma_{i}(z)\neq 0$ at $z=z_{i,n}$, then $z_{i,n}$ is a simple pole of $F_i(z)$; (2) $\varGamma_i(z)=\varLambda_i(z)=0$. To determine whether $z_{i,n}$ is a pole in this case, the orders of the zeros in the numerator and denominator must be analyzed separately. A pole exists if and only if the zero of the denominator is of higher order than that of the numerator. Then, we systematically classify the singularities of the topological vector potential by analyzing their charges.

Analysis of the explicit expressions for $F_1(z)$ and $F_2(z)$ reveals that $F_2(z)$ is a permuted version of $F_1(z)$ under the parameter substitution $(\beta_2,\gamma_1,\omega_1,\omega_2)\mapsto(\gamma_1,\beta_2,\omega_2,\omega_1)$. This structural isomorphism implies the two functions share analogous mathematical properties. Hence, it suffices to analyze $F_1(z)$, with the understanding that the singularities associated with $F_2(z)$ follow via the same isomorphic mapping. We begin with a mathematical analysis of  $\varGamma_1(z)=0$ for $F_1(z)$. Three cases must be considered.

\textbf{Case 1:}  $S_1=0$, monopoles carrying $\pm1/2$ charge.

Equation $\varGamma_1(z)=0$ naturally admits a family of solutions governed by the transcendental equation $\e^{2\omega_1z}\varpi^2+\e^{-2\omega_1z}|\gamma_1|^4=0$, which yields the general solution form $z_{1,n}^{\pm}=\frac{\ln|\chi|}{2\omega_1}\pm\frac{\pi\ii}{4\omega_1}+\frac{n\pi\ii}{\omega_1}$,
with $\chi=\ii|\gamma_1|^2/\varpi$ and $n\in\mathbb{Z}$. Under the conditions $\mathbf{\Gamma R} \neq \mathbf{0}$ and $\mathbf{\Lambda}_1 \mathbf{R}_1 \neq \mathbf{0}$, all these points are simple poles of topological vector potential $\mathbf{A}_1$ with residues $\pm\frac{1}{2\ii}$, which implies that the corresponding magnetic monopoles of $\mathbf{A}_1$ have charges of $\pm\frac{1}{2}$.  Furthermore, these singularities correspond to the simple zeros of the density function $|q_1(z)|^2$.

\textbf{Case 2:}  $\mathbf{\Gamma}\mathbf{R}=0$, monopoles carrying  $\pm3/2$ charge.

The analysis becomes more complex when $\mathbf{\Gamma R} = 0$ under the condition that $e^{2\omega_1z}\varpi^2 + e^{-2\omega_1z}|\gamma_1|^4 \neq 0$. It is essential to determine whether each zero $z = z_{1,n}$ of $\mathbf{\Gamma R} = 0$ also satisfies $\varLambda_1(z) = 0$. If, instead, $\varLambda_1(z_{1,n}) \neq 0$ and $\partial_z \varGamma_1(z_{1,n}) \neq 0$, the corresponding virtual monopole charges can be directly evaluated using the residue formula: $\lim\limits_{z\rightarrow z_n}\frac{\varLambda_1(z)}{\partial_z\varGamma_{1}(z)}$. In this scenario, the resulting gradient flow singularities correspond to simple poles of the density function. Our extensive numerical calculations confirm that the virtual monopoles in this case carry quantized charges of $\pm \frac{3}{2}$. Several illustrative examples will be presented later.

\textbf{Case 3:} $S_1\!\!=\!\!\mathbf{\Gamma}\mathbf{R}\!\!=\!\!0$, monopoles carrying  $\pm5/2$ charge.
	
The most challenging scenario arises when both $S_1=e^{2\omega_1z}\varpi^2 + e^{-2\omega_1z}|\gamma_1|^4 = 0$ and $\mathbf{\Gamma R} = 0$ are simultaneously satisfied. Here, we analyze this case under the symmetric condition given by $2\omega_1\ln{|\beta_2|}-2\omega_2\ln|\gamma_1|=(\omega_1-\omega_2)\ln \left|\frac{\omega_1-\omega_2}{\omega_1+\omega_2}\right|$, with $\omega = \frac{4n_1+1}{4n_2+3} \in (0,1)$, where $n_1, n_2 \in \mathbb{Z}^+ \cup \{0\}$, to facilitate analytical treatment. This choice of $\omega = \frac{4n_1+1}{4n_2+3} \in (0,1)$ is detailed in the appendix \ref{app1}. The case where $\omega = \frac{4n_1+3}{4n_2+1} \in (0,1)$ with $n_1, n_2 \in \mathbb{Z}^+ \cup \{0\}$ can be discussed similarly. The corresponding zeros are then calculated as $z_{1,n}^\pm=\frac{\ln|\chi|}{2\omega_1}\pm\frac{(4n_2+3)\pi\ii}{4\omega_1}+\frac{ n_3(4n_2+3)\pi\ii}{\omega_1}$, where $n_3\in\mathbb{Z}$. However, these solutions $z_{1,n}^\pm$ simultaneously satisfy the equation $\mathbf{\Lambda_1 R_1} = 0$. As pointed out above, the orders of the zeros in the numerator and denominator of $F_i(z)$ must be analyzed separately to determine whether $z_n$ constitutes a pole in this case. For a fixed ratio $\omega$, the coordinate transformations $Z = \omega_1 z$ and $\omega_2 z = \omega Z$ show that the solutions are scaling invariant with respect to $\omega_1$ and $\omega_2$. This allows us to normalize $2\omega_1 = 1$ to simplify the analysis without sacrificing generality. Easily, we have $\frac{\partial^m\mathbf{\Gamma}\mathbf{R}}{\partial z^m}=\mathbf{\Gamma}J_0^m\mathbf{R},~\frac{\partial^m\mathbf{\Lambda_1}\mathbf{R_1}}{\partial z^m}=\mathbf{\Lambda_1}J_1^m\mathbf{R_1}$, where
$J_0=\diag\left(1+\omega,-1-\omega,1-\omega,-1+\omega,0\right)$, $J_1=\diag\left(1+\omega,-1-\omega,1-\omega,-1+\omega,2,-2,0\right)$.
Through direct calculation, we find that for $m_0,m_1\le 3$, both $\mathbf{\Gamma}J_0^{m_0}\mathbf{R}|_{z={z}^\pm}$ and $\mathbf{\Lambda_1}J_1^{m_1}\mathbf{R_1}|_{z={z}^\pm}$ vanish. However, when $m_0=m_1=4$, these expressions become non-zero. This demonstrates that the critical points correspond to fourth-order zeros in the numerator and fifth-order zeros in the denominator of the gradient flow $F_1(z) = \frac{\partial \phi_1}{\partial z}$, respectively. As a result, these critical points behave as simple poles. The residues at these singularities are given by $\lim\limits_{z\rightarrow {z}^{\pm}}(z-{z}^\pm)\frac{\partial \phi_1}{\partial z}=\pm\frac{5}{2\ii}$, which reveals that the magnetic monopole charges associated with $\mathbf{A}_1$ at these poles are $\pm \frac{5}{2}$. They also coincide with fourth-order zeros of $\mathbf{\Gamma}\mathbf{R}$ and simple zeros of $S_1$, indicating that they are in fact third-order poles of the density function $|q_1(z)|^2$.

Therefore, virtual monopoles within the topological vector potential of vector DHBS can admit charges of $\pm1/2$, $\pm3/2$, and $\pm5/2$, though not all necessarily simultaneously. Their appearance depends on specific conditions.  For the $i$-th component, we denote the numbers of monopoles with charges $\pm\frac{1}{2}$, $\pm\frac{3}{2}$, and $\pm\frac{5}{2}$ are $r_{i,1}$, $r_{i,2}$ and $r_{i,3}$, respectively. Then, the topological vector potential in Eq.~\eqref{A_i1} can be explicitly expressed in the periodic symmetric summation form as:
\begin{align}	\mathbf{A}_i\!=\!&\sum_{n=-\infty}^{\infty}\!\bigg[\!\left(\!+\frac{1}{2}\right)\!\sum_{j=1}^{r_{i,1}}\frac{(x\!-\!x_{i,j})\mathbf{e}_y\!-\![y\!-\!(y_{i,j}^+\!+\!nT)]\mathbf{e}_x}{(x\!-\!x_{i,j})^2\!+\![y\!-\!(y_{i,j}^+\!+\!nT)]^2}\nonumber
\end{align}
\begin{align}
&\!+\!\left(\!-\frac{1}{2}\right)\!\sum_{j=1}^{r_{i,1}}\frac{(x\!-\!x_{i,j})\mathbf{e}_y\!-\![y\!-\!(y_{i,j}^-\!+\!nT)]\mathbf{e}_x}{(x\!-\!x_{i,j})^2\!+\![y\!-\!(y_{i,j}^-\!+\!nT)]^2}\nonumber\\
&\!+\!\left(\!+\frac{3}{2}\right)\!\sum_{j=1}^{r_{i,2}}\frac{ (x\!-\!x_{i,j})\mathbf{e}_y\!-\![y\!-\!(y_{i,j}^+\!+\!nT)]\mathbf{e}_x}{(x\!-\!x_{i,j})^2\!+\![y\!-\!(y_{i,j}^+\!+\!nT)]^2}\nonumber\\
&\!+\!\left(\!-\frac{3}{2}\right)\!\sum_{j=1}^{r_{i,2}}\frac{ (x\!-\!x_{i,j})\mathbf{e}_y\!-\![y\!-\!(y_{i,j}^-\!+\!nT)]\mathbf{e}_x}{(x\!-\!x_{i,j})^2\!+\![y\!-\!(y_{i,j}^-\!+\!nT)]^2}\nonumber\\
&\!+\!\left(\!+\frac{5}{2}\right)\!\sum_{j=1}^{r_{i,3}}\frac{ (x\!-\!x_{i,j})\mathbf{e}_y\!-\![y\!-\!(y_{i,j}^+\!+\!nT)]\mathbf{e}_x}{(x\!-\!x_{i,j})^2\!+\![y\!-\!(y_{i,j}^+\!+\!nT)]^2}\nonumber\\
&\!+\!\left(\!-\frac{5}{2}\right)\!\sum_{j=1}^{r_{i,3}}\frac{ (x\!-\!x_{i,j})\mathbf{e}_y\!-\![y\!-\!(y_{i,j}^-\!+\!nT)]\mathbf{e}_x}{(x\!-\!x_{i,j})^2\!+\![y\!-\!(y_{i,j}^-\!+\!nT)]^2}\bigg],
\end{align}
where $T$ is the period of the magnetic monopole distribution. In addition, $y_{i,j}^+=-y_{i,j}^-$, $|y_{i,j}^\pm|<\frac{T}{2}$ and the sign ``$\pm$'' of $y_{i,j}^\pm$  corresponds to the sign of the charge. In particular, the phase jumps of the static solitons can be quantitatively determined through line integrals of the corresponding topological  vector potential along the real axis: $\Delta \phi_1= -\int_{-\infty}^{\infty}\mathbf{A}_1(x,y=0)\cdot \mathbf{e}_x\dd x=\frac{7\pi}{2},\Delta \phi_2= -\int_{-\infty}^{\infty}\mathbf{A}_2(x,y=0)\cdot \mathbf{e}_x\dd x=\frac{5\pi}{2}$. When generalized to arbitrary velocity regimes ($v\neq 0$),  the phase shift of these vector DHBS can vary within the range $(0, 6\pi)$.

\subsection{Examples of virtual monopoles with various charges for DHBS}

\begin{figure*}[htpb]	
\centering
\includegraphics[width=175mm]{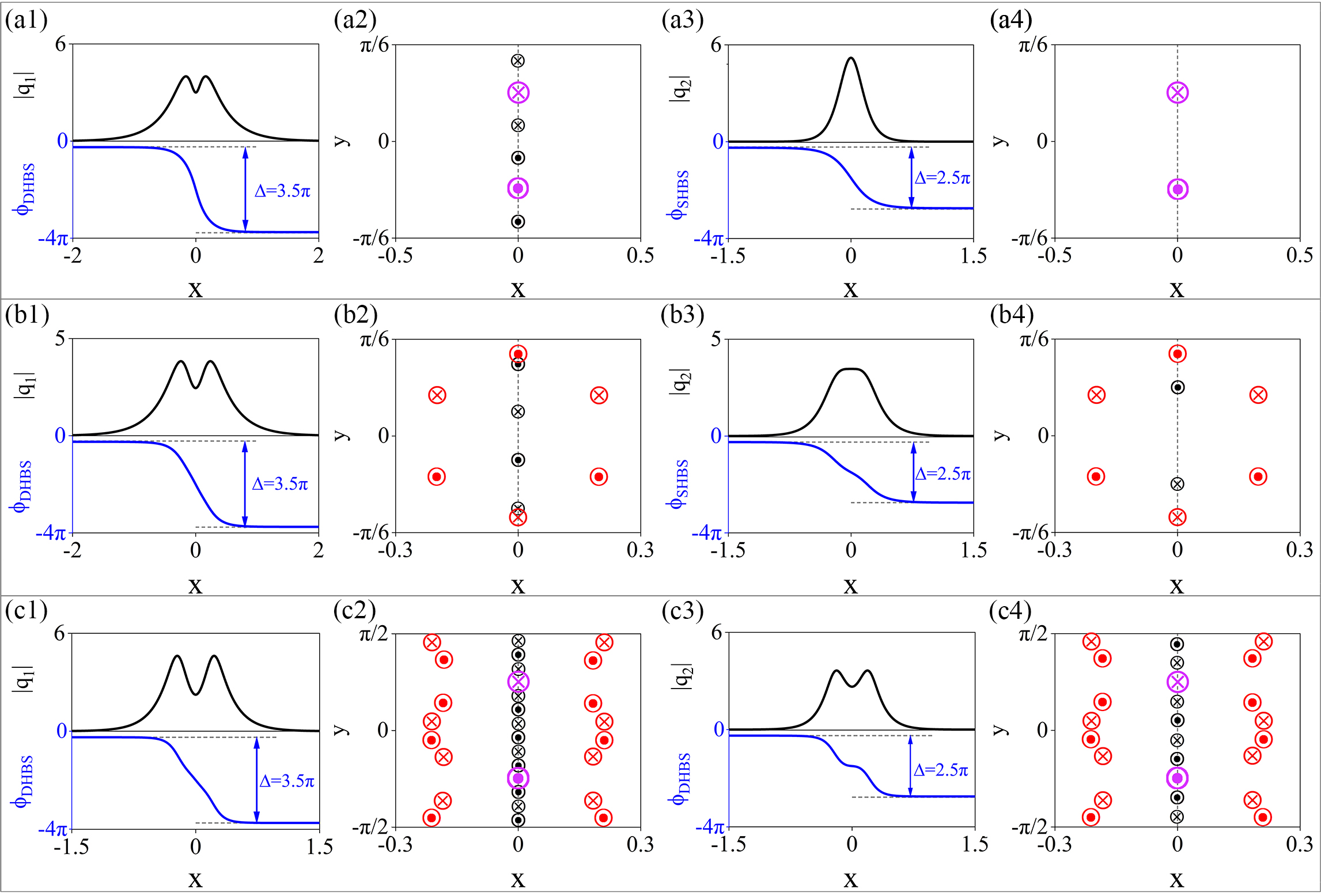}
\caption{Abundant topological charges of the virtual magnetic fields for three types of  vector symmetric DHBS.  Rows (a1)-(a4), (b1)-(b4), and (c1)-(c4) correspond to the three distinct cases with parameters given below. In each row, the black curves in the top panels represent the density profiles. The corresponding phase variations in real space are plotted in blue below them, and the virtual magnetic fields in the complex plane are shown to the right. The virtual magnetic monopoles are color- and symbol-coded: black, red, and purple for charges of $\pm\frac{1}{2}$, $\pm\frac{3}{2}$, and $\pm\frac{5}{2}$; and $\bigodot$ and $\bigotimes$ for positive and negative signs, respectively. These virtual monopoles are shown over one period along the $y$ axis. Parameters for each case are as follows: (a) $\omega_1=3\omega_2=9,$ $\beta_2=\gamma_1=\frac{1}{\sqrt{2}}$;  (b) $\omega_1=2\omega_2=6,~\beta_2=\gamma_1=\frac{1}{\sqrt{3}}$; (c) $\omega_1=7,\omega_2=5,\beta_2=\gamma_1=\frac{1}{\sqrt{6}}$.}\label{fig5}
\end{figure*}

We now present several examples to demonstrate the rich topological phase characteristics of DHBS by conducting the above analysis. We begin by considering the specific case of $\omega = \frac{1}{3}$. Combining this with the symmetric condition, the resulting gradient flows in the two components are $F_1(x)=\frac{\omega_1-4\omega_1\cosh\left[\frac{2}{3}(2\omega_1x-\ln|\chi|)\right]}{\cosh(2\omega_1x-\ln|\chi|)}$ and
$F_2(x)=-\frac{5\omega_2}{\cosh(2\omega_2x-\ln|\chi|)}$, with $|\chi|=2|\gamma_1|^2=8|\beta_2|^6$. The singularities of the corresponding analytic extensions $F_1(z)$ and $F_2(z)$ can then be accurately calculated from these expressions. For example, by setting $|\chi|=1$ and $|\gamma_1|^2=|\beta_1|^2=\varpi=\frac{1}{2}$, the solution Eq.~\eqref{solution2} gives symmetric DHBS-SHBS with axis $x=0$, and the density profiles of two components have been shown in Figs. \hyperref[fig5]{\ref{fig5}(a1)} and \hyperref[fig5]{\ref{fig5}(a3)}. The phase variation in both components is confirmed to exhibit a distinct single-step structure, as demonstrated by the blue curve below the density profiles. This dramatic change in the phase jump structure can be attributed to the underlying topological potentials. Under analytic extensions, the distribution of the corresponding virtual magnetic field within one period is depicted in Figs. \hyperref[fig5]{\ref{fig5}(a2)} and \hyperref[fig5]{\ref{fig5}(a4)}, respectively. In this case, the singularities of $F_i(z)$ are simple poles obtained by \textbf{Case 1} and \textbf{Case 3}. The virtual monopoles located at $\left(0,\pm\frac{\pi}{36}\right)$ and $\left(0,\pm\frac{5\pi}{36}\right)$, marked by small black symbols, carry a charge of $\mp\frac{1}{2}$. However, this type of monopole is confined to component $q_1$. Moreover, unlike the virtual monopole fields reported in Refs. \cite{Zhao1,Qin3,Zhao3,Zhao4,Zhao5}, these magnetic monopoles with positive and negative charges are not alternately distributed along the imaginary axis but are symmetrically distributed about the real axis. Strikingly, virtual monopoles with charge $\mp\frac{5}{2}$ emerge at $(0,\pm\frac{\pi}{12})$ in both components, and are indicated by big purple symbols. The above analysis reveals that the point-like magnetic field in Fig. \hyperref[fig5]{\ref{fig5}(a2)} originates from the simple zeros and triple poles of the DHBS density function in component $q_1$. In contrast, the field in Fig. \hyperref[fig5]{\ref{fig5}(a4)} is solely due to the triple poles of the SHBS density function in component $q_2$.

Then, we consider the case for virtual monopoles with charges of $\pm \frac{1}{2}$ and $\pm\frac{3}{2}$ by setting $\omega = \frac{1}{2}$ and $|\beta_2|^2 = |\gamma_1|^2 = |\varpi| = \frac{1}{3}$. With this parameter choice, the solution Eq.~\eqref{solution2} yields another symmetric DHBS-SHBS pattern, as depicted in   Fig.~\hyperref[fig5]{\ref{fig5}(b1)} and \hyperref[fig5]{\ref{fig5}(b3)} with black curves. The density profile of the SHBS in component $q_2$ possesses a triple pole at $x=0$, which is distinct from the structures shown in Fig.~\hyperref[fig5]{\ref{fig5}(a3)}. In this case, the equation $\mathbf{\Gamma R} = 0$ reduces to a cubic equation in $\cosh(Z)$: $2\cosh(Z)^3+3\cosh(Z)+4=0$. Introducing $\zeta = 3\sqrt{2} - 4$, the solutions for $\cosh(Z)$ are given by $\cosh(Z)=\frac{-2^{2/3}+2^{1/3}\zeta^{2/3}}{2\zeta^{1/3}},\frac{-2^{2/3}(-1\pm\ii\sqrt{3})+2^{1/3}(-1\mp\ii\sqrt{3})\zeta^{2/3}}{4\zeta^{1/3}}$, corresponding to \textbf{Case 2}. Then, the singularities $z_n$ can be obtained analytically and correspond to the simple poles of the density function. Analytical computation of the residues $\mathbf{Res}[F_i(z_n)]$ confirms that these virtual monopoles carry a quantized magnetic flux of $3\pi$ and a charge of $\pm\frac{3}{2}$, as indicated by the red symbols in Fig. \hyperref[fig5]{5(b2)} and Fig.~\hyperref[fig5]{5(b4)}. These virtual monopoles in both components share identical coordinates. Within one period, they are located approximately at $(0,\pm0.441)$, $(0.198,\pm 0.220)$, and $(-0.198,\pm 0.220)$. By further performing \textbf{Case 1}, we get two pairs of virtual monopoles with charge $\mp\frac{1}{2}$ for DHBS shown in Fig. \hyperref[fig5]{5(b2)} with black symbols, which are located at $\left(0,\pm\frac{\pi}{24}+\frac{n\pi}{6}~(n=0,1)\right)$; while there is one pair of monopoles with charge $\mp\frac{1}{2}$ for DHBS in Fig. \hyperref[fig5]{5(b4)}, located at $\left(0,\pm\frac{\pi}{12}\right)$. Such a virtual magnetic monopole distribution yields a single-step structural phase variation along the real axis for the DHBS in component $q_1$, but a two-step variation for the SHBS in component $q_2$, interestingly, as depicted in Fig.~\hyperref[fig5]{\ref{fig5}(b1)} and \hyperref[fig5]{\ref{fig5}(b3)} with blue curves, in contrast to the phase variations for symmetric DHBS-SHBS shown in Figs. \hyperref[fig5]{\ref{fig5}(a1)} and \hyperref[fig5]{\ref{fig5}(a3)}.

We further investigate a more complex symmetric configuration, with parameters chosen as $\omega_1 = 7$, $\omega_2 = 5$, and $\beta_2 = \gamma_1 = \frac{1}{\sqrt{6}}$. The corresponding density profiles exhibit a symmetric DHBS-DHBS pattern, as shown by the black curves in Figs. \hyperref[fig5]{\ref{fig5}(c1)} and \hyperref[fig5]{\ref{fig5}(c3)}. Surprisingly, by examining the singularities of the phase gradient, we find that they arise from simple zeros, simple poles, and a third-order poles of the density function, respectively. Consequently, the topological phase features in this case involve topological charges of $\pm1/2$, $\pm3/2$, and $\pm5/2$, corresponding to \textbf{Case 1} (simple zeros), \textbf{Case 2} (simple poles), and \textbf{Case 3} (third-order poles).  Under the topological theory framework presented above, the virtual magnetic monopole fields generated by these singularities can be obtained, as illustrated in Figs. \hyperref[fig5]{\ref{fig5}(c2)} and \hyperref[fig5]{\ref{fig5}(c4)}.
Here, we also exhibit the distribution of monopoles along the $y$ axis over a period. It can be seen that the distributions of the virtual monopoles are much more abundant than the two examples shown in Figs. \hyperref[fig5]{\ref{fig5}(a)} and \hyperref[fig5]{\ref{fig5}(b)}. The virtual monopoles with charge $\mp\frac{1}{2}$, marked by black points and obtained by solving \textbf{Case 1}, are located at $\left(0,\ \pm\frac{\pi}{28} \pm \frac{n\pi}{7}\right)$ for $-3 \le n \le 3,\ n \ne -2$ in Fig. \hyperref[fig5]{5(c2)}, and at $\left(0,\ \pm\frac{\pi}{20} \pm \frac{n\pi}{5}\right)$ for $-2 \le n \le 2,\ n \ne 1$ in Fig. \hyperref[fig5]{5(c4)}. It is evident that these virtual monopoles with positive and negative charges alternate along the imaginary axis. However, their distributions differ between the two components, resulting in different contributions to the phase distribution. The virtual monopoles marked by red symbols correspond to those obtained from solving \textbf{Case 2}. They share identical coordinates in both components, which can be approximately given by $\left(-0.183, \pm 0.425\mp\frac{n\pi}{2}\right)$ and $\left(0.183, \pm 0.425\mp\frac{n\pi}{2}\right)$ with charge $\pm(-1)^n\frac{3}{2}$, and $\left(-0.211, \pm 0.141\mp\frac{n\pi}{2}\right)$ and $\left(0.211, \pm 0.141\mp\frac{n\pi}{2}\right)$ with charge $\mp(-1)^n\frac{3}{2}$ ($n = 0, 1$). The purple magnetic monopoles carry charge $\mp\frac{5}{2}$ and are located at $\left(0,\ \pm\frac{\pi}{4}\right)$ in both components, as obtained by solving \textbf{Case 3}. Remarkably, these virtual monopole fields cause the DHBS phase jump to exhibit a triple-step structure in component $q_1$ and a double-step structure in component $q_2$, as shown by the blue curves in Figs. \hyperref[fig5]{5(c1)} and \hyperref[fig5]{5(c3)}. Although the triple-step structure of the DHBS is not readily apparent in Figs. \hyperref[fig5]{5(c1)} due to the complex topological potential, it can be confirmed by phase gradient flow.

These results reveal the abundant and exotic  topological phase characteristics of vector DHBS in a coupled nonlinear system with self-steepening. Notably, the system exhibits a topological vector potential that supports the coexistence of monopoles with charges $\pm1/2$ and, more remarkably, high odd multiple magnetic charges $\pm3/2$, and $\pm5/2$. Through a direct analytical correspondence, these charges are traced back to simple zeros, simple poles, and third-order poles of the density function, respectively, thereby establishing a new relation between higher-order density singularities and high odd multiple topological charges.

\section{Conclusion And Discussion}\label{sec6}

In summary, by applying Dirac's magnetic monopole theory in an extended complex plane, we have uncovered high topological charges hidden in the phases of nonlinear waves under self-steepening effects (excluding the usual cubic nonlinearities). For RWs, our analysis demonstrates that both scalar and vector RWs possess four pairs of virtual magnetic monopoles: two pairs with charge $\pm1/2$ and two pairs with charge $\pm3/2$. The monopoles carrying $\pm1/2$ charge originate from simple zeros of the density function, analogous to the virtual monopoles reported in previous studies on certain nonlinear waves \cite{Qin2,Qin3,Zhao2,Zhao3,Zhao4,Zhao5}. The high charges $\pm3/2$, in contrast, originate from simple poles of the RW density. Although scalar and vector RWs possess the same set of monopoles, their dynamics differ significantly. In the scalar ESRW, the associated monopoles carrying $\pm1/2$ charges merger events on the real axis are temporally separated, leading to a phase shift of $2\pi$ between them. In contrast, in the vector ESRW considered here, the two $\pm1/2$ pairs move symmetrically and merge simultaneously, so their contributions to phase shift cancel at all times. Furthermore, we find that in the vector FPRW, two pairs of virtual monopoles with charge $\pm1/2$ collide and merge on the imaginary axis, an event that does not affect the phase shift.

Inspired by the exotic topological properties of RWs, we further extend our analysis to BSs. For the scalar SHBS, the density simple poles coincide with singularities in the underlying topological vector potentials. These singularities behave as point-like virtual monopoles carrying a high topological charge of $\pm3/2$. More remarkably, through analytical studies of vector DHBSs, we discover virtual monopole fields carrying charges $\pm1/2$, $\pm3/2$, and $\pm5/2$. These findings not only confirm a series of quantized magnetic charges for virtual monopoles underlying nonlinear waves, but also establish a correspondence whereby the higher charges $\pm3/2$ and $\pm5/2$ arise from simple poles and third-order poles of the density function, respectively. It should be noted that for moving DHBS, the topological vector potential becomes more complicated than in the static case, making accurate analytical treatment considerably more challenging. In this regime, our numerical calculations demonstrate that virtual monopoles with charge $\pm5/2$ are exceedingly difficult to observe. These results greatly deepen our understanding of topological phases in nonlinear waves and motivate further exploration of more systematic quantized charges and higher-order density singularities in nonlinear wave dynamics.

Nonlinear waves with both self-steepening effects and usual cubic nonlinearities were shown to admit $\pm1/2$ charges \cite{Zhao1}, which are similar to the ones with only cubic nonlinearities but are different from the above ones with only self-steepening effects. The underlying mechanism of how the poles of density generate topological singularities remains unknown. Therefore, the close relations among zeros or poles of wave density, topological charges, and different physical terms still require further investigation.

\section*{ACKNOWLEDGMENTS}
Y.-H. Q is supported by the National Natural Science Foundation of China (Grant No. 12405004), the Natural Science Foundation of Xinjiang Uygur Autonomous Region Project (Grant No. 2024D01C232), the Scientific Research Projects Funded by the Basic Research Business Expenses of Autonomous Region Universities (Grant No. XJEDU2024P011), and the ``Tianchi Talent" Introduction Plan in Xinjiang Uygur Autonomous Region. L.-C. Z. is supported by the National Natural Science Foundation of China (Contracts No. 12375005 and No. 12235007), and the Major Basic Research Program of Natural Science of Shaanxi Province (Grant No. 2018KJXX-094).
	
\appendix
\section{The related exact expressions}\label{app2}

The function $F({\pmb{x}},{\pmb{t}})$ in Sec. \ref{sec2}  is given by
\begin{align*}
	F({\pmb{x}},{\pmb{t}})=\frac{8}{J_{[2,1]}^+}\sum_{j=0}^{6}F_j(s,\mu;{\pmb{x}}){\pmb{t}}^j,
\end{align*}
where $J_{[m,n]}^\pm=s^m( \mu^2\pm 1)^n$, and
\begin{align*}
	F_0&\!=\!\frac{-6+J^+_{[4,1]}{\pmb{x}}^2\left(13-4s^4{\pmb{x}}^2+J^+_{[8,2]}{\pmb{x}}^4\right)}{J^+_{[12,3]}},\\
	F_1&\!=\!\frac{{\pmb{x}}\!\left[25\mu^2\!-\!1\!-\!\left(8J_{[4,1]}^-\!+\!10J_{[4,2]}^+\right){\pmb{x}}^2\!+\!J_{[0,1]}^-J_{[8,2]}^+{\pmb{x}}^4\right]}{\mu^2J_{[8,2]}^+},\\
	F_2&\!=\!\frac{-13-12\left(J_{[4,1]}^-+\mu^2J_{[4,1]}^+\right){\pmb{x}}^2+J_{[8,2]}^+{\pmb{x}}^4}{\mu^2J_{[8,2]}},\\
	F_3&\!=\!\frac{2{\pmb{x}}(-1+12\mu^2+7\mu^4+2\mu^6+J_{[0,1]}^-J_{[4,2]}^+{\pmb{x}}^2)}{\mu^4J_{[4,2]}^+},\\
	F_4&\!=\!\frac{-4\!-\!4J_{[0,2]}^+\!-\!J_{[4,2]}^+{\pmb{x}}^2}{\mu^4J_{[4,2]}^+},F_5\!=\!\frac{J_{[0,1]}^-{\pmb{x}}}{\mu^6},F_6\!=\!\frac{-1}{\mu^6}.
\end{align*}
 The function $G_1({\pmb{x}},{\pmb{t}})$ in Sec. \ref{SG1} is given by
 \begin{align*}
 	G_1({\pmb{x}},{\pmb{t}})=\frac{3}{2^{17}}\sum_{j=0}^{6}G_{1}^{[j]}({\pmb{x}}){\pmb{t}}^j,
 \end{align*}
 where
 {\fontsize{9.8pt}{10pt}\selectfont
 \begin{align*}
 	G_{1}^{[0]}\!=&3^7(43-24\sqrt{3})-2^53^6(17-10\sqrt{3}){\pmb{x}}^2\\
 	   &+2^{10}3^4(3-2\sqrt{3}){\pmb{x}}^4-2^{15}\cdot3{\pmb{x}}^6,\\
 	G_{1}^{[1]}\!=&2^6{\pmb{x}}\big[3^4(71\!\sqrt{3}\!-\!126)\!+\!2^63^2(18\!-\!11\!\sqrt{3}){\pmb{x}}^2\!-\!2^{10}\!\sqrt{3}{\pmb{x}}^4\big],\\
 	G_{1}^{[2]}\!=&2^5\big[3^5(59\!-\!30\sqrt{3})\!-\!2^63^2(21\!-\!14\sqrt{3}){\pmb{x}}^2\!-\!2^{15}\cdot 3{\pmb{x}}^4\big],\\
 	G_{1}^{[3]}\!=&-2^{12}{\pmb{x}}\left[3^2(18-7\sqrt{3})+2^5\sqrt{3}{\pmb{x}}^2\right],G_{1}^{[5]}=2^{16}\sqrt{3}{\pmb{x}},\\
 	G_{1}^{[4]}\!=&2^{10}\cdot 3\left[3^2(17-6\sqrt{3})+2^5{\pmb{x}}^2\right],G_1^{[6]}=3\times 2^{15}.
 \end{align*}}
 The function $G_2({\pmb{x}},{\pmb{t}})$ in Sec. \ref{SG2} is given by
 \begin{align*}
 	G_2({\pmb{x}},{\pmb{t}})=\frac{3}{2^{17}}\sum_{j=0}^{6}G_{2}^{[j]}({\pmb{x}}){\pmb{t}}^j,
 \end{align*}
 where
 {\fontsize{9.8pt}{10pt}\selectfont
 \begin{align*}
 	G_{2}^{[0]}\!=&3^7(43+24\sqrt{3})-2^53^6(17+10\sqrt{3}){\pmb{x}}^2\\
 	&+2^{10}3^4(3+2\sqrt{3}){\pmb{x}}^4-2^{15}\cdot3{\pmb{x}}^6,\\
 	G_{2}^{[1]}\!=&2^6\!{\pmb{x}}\!\big[\!-\!3^4\!(71\!\sqrt{3}\!+\!126)\!+\!2^63^2\!(18\!+\!11\!\sqrt{3}){\pmb{x}}^2\!+\!2^{10}\!\sqrt{3}{\pmb{x}}^4\big],\\
 	G_{2}^{[2]}\!=&2^5\big[3^5(59\!+\!30\sqrt{3})\!-\!2^63^2(21\!+\!14\sqrt{3}){\pmb{x}}^2\!-\!2^{15}\cdot 3{\pmb{x}}^4\big],\\
 	G_{2}^{[3]}\!=&-2^{12}{\pmb{x}}\left[3^2(18+7\sqrt{3})-2^5\sqrt{3}{\pmb{x}}^2\right],G_{2}^{[5]}\!=\!-2^{16}\sqrt{3}{\pmb{x}},\\
 	G_{2}^{[4]}\!=&2^{10}\cdot 3\left[3^2(17+6\sqrt{3})+2^5{\pmb{x}}^2\right],G_2^{[6]}\!=3\times 2^{15}.
 \end{align*}}

 The exact expressions in Eq. \eqref{solution2} are given by
 {\fontsize{9.8pt}{10pt}\selectfont
 \begin{align*}	
 	&H \!=\!|\varrho_{1}\e^{-\kappa_+}\!+\!\varrho_{2}\e^{\kappa_+}\!+\!\varrho_{3}\e^{-\kappa_-}\!+\!\varrho_{4}\e^{\kappa_-}|^2,\\
 	&\varrho_1\!=\!{|\beta_2\gamma_1|^2}\sqrt{(v^2+4\omega_1^2)(v^2+4\omega_2^2)},\\
 	&\varrho_2\!=\!{\varpi^2}\left(v-2\ii\omega_1\right)\left(v-2\ii\omega_2\right),\\
 	&\varrho_3\!=\!{|\gamma_1|^2}(-v+2\ii\omega_2)\sqrt{v^2+4\omega_1^2},\\
 	&\varrho_4\!=\!{|\beta_2|^2}(-v+2\ii\omega_1)\sqrt{v^2+4\omega_2^2},\\
 	&V_1\!=\!\e^{-\omega_1x-\ii\theta_1}|\gamma_1|^2\!-\e^{\omega_1x+\ii\theta_1}\varpi\!,\\
 	&V_2\!=\!\e^{-\omega_2x-\ii\theta_2}|\beta_2|^2+\!\e^{\omega_2x+\ii\theta_2}\varpi\!,\\	
 	&N\!=\!\e^{-\kappa_++\ii(\theta_1+\theta_2)}|\beta_2\gamma_1|^2+\e^{\kappa_+-\ii(\theta_1+\theta_2)}\varpi^2\\
 	&~~~~~+\e^{-\kappa_-+\ii(\theta_1-\theta_2)}|\gamma_1|^2+\e^{\kappa_--\ii(\theta_1-\theta_2)}|\beta_2|^2,\\
 	&c_1\!=\!2\sqrt{6}\beta_2^* \omega_2(v^2+4\omega_1^2)(v^2+4\omega_2^2)^\frac{3}{4}\e^{-2\ii\theta_1},\\
 	&c_2\!=\!2\sqrt{6}\gamma_1^* \omega_1(v^2+4\omega_2^2)(v^2+4\omega_1^2)^\frac{3}{4}\e^{-2\ii\theta_2},\\	
 	&\theta_1\!=\!\arctan\frac{b_1}{a_1},\theta_2\!=\!\arctan\frac{b_2}{a_2},\\
 	&\varpi\!=\!\frac{\omega_1-\omega_2}{\omega_1+\omega_2},\kappa_\pm\!=\!(\omega_1\pm\omega_2)(x-vt).
 \end{align*}}

 The exact expressions of $F_i(z)$ in Sec. \ref{SF} are
 \begin{align*}
 	&S_1=\e^{2\omega_1x}\varpi^2+\e^{-2\omega_1x}|\gamma_1|^4,\\
 	&S_2=\e^{2\omega_2x}\varpi^2+\e^{-2\omega_2x}|\beta_2|^4,\\	
 	&\mathbf{R}=\left(\e^{2\kappa_+},\e^{-2\kappa_+},\e^{2\kappa_-},\e^{-2\kappa_-},1\right)^T,\\
 	&\mathbf{\Lambda_{11}}\!=\!2|\gamma_1|^2\omega_1\frac{2+\omega}{1+\omega}(\varpi^4,|\beta_2\gamma_1|^4),\\
 	&\mathbf{\Lambda_{21}}\!=\!2|\beta_2|^2\omega_2\frac{1+2\omega}{1+\omega}(\varpi^4,|\beta_2\gamma_1|^4),\\
 	&\mathbf{\Lambda_{12}}\!=\!2|\gamma_1|^2\omega_1\varpi\frac{2-\omega}{1+\omega}(|\beta_2|^4,|\gamma_1|^4),\\
 	&\mathbf{\Lambda_{22}}\!=\!2|\beta_2|^2\omega_2\varpi\frac{1-2\omega}{1+\omega}(|\beta_2|^4,|\gamma_1|^4),\\
 	&\mathbf{\Lambda_{13}}\!=\!3|\beta_2|^2\omega_2\left(\varpi^4,|\gamma_1|^8,\frac{2|\gamma_1|^4\varpi}{3}\frac{7-3\omega^2}{(1+\omega)^2}\right),\\
 	&\mathbf{\Lambda_{23}}\!=\!3|\gamma_1|^2\omega_1\left(\varpi^4,|\beta_2|^8,\frac{2|\beta_2|^4\varpi}{3}\frac{3-7\omega^2}{(1+\omega)^2}\right),\\
 	&\mathbf{\Lambda_1}\!=\!\left(\mathbf{\Lambda_{11}},\mathbf{\Lambda_{12}},\mathbf{\Lambda_{13}}\right),
 	~\mathbf{\Lambda_2}\!=\!\left(\mathbf{\Lambda_{21}},\mathbf{\Lambda_{22}},\mathbf{\Lambda_{23}}\right),\\
 	&\mathbf{\Gamma}=\left(\varpi^4,|\beta_2\gamma_1|^4,|\beta_2|^4,|\gamma_1|^4,\frac{8\omega|\beta_2\gamma_1|^2}{(\omega+1)^2}\right),\\	&\mathbf{R}_1=\left(\e^{2\kappa_+},\e^{-2\kappa_+},\e^{2\kappa_-},\e^{-2\kappa_-},\e^{4\omega_1x},\e^{-4\omega_1x},1\right)^T,\\
 	&\mathbf{R}_2=\left(\e^{2\kappa_+},\e^{-2\kappa_+},\e^{2\kappa_-},\e^{-2\kappa_-},\e^{4\omega_2x},\e^{-4\omega_2x},1\right)^T.
 \end{align*}

\section{Topological vector potential and phase shift of solitons}\label{app0}

We begin by discussing the properties of a meromorphic function. For a meromorphic function, it satisfies the following theorem:

\begin{theorem} \label{the1}
	Consider a meromorphic function $M(z)$ with $m$ singularities $z_n ~(1\le n\le m)$, each having corresponding multiplicities $l_n$. Then,
	$$M(z)=F(z)+\sum_{n=1}^{m}f_n,~~f_n=\sum_{j=1}^{l_n}\frac{A_{n,j}}{(z-z_n)^{j}},$$
	where $F(z)$ is a holomorphic function,  and
	$$A_{n,j}=\frac{1}{(l_n-j)!}\lim\limits_{z\rightarrow z_n}\frac{\dd^{l_n-j}}{\dd z^{l_n-j}}[(z-z_n)^{l_n}M(z)].$$
\end{theorem}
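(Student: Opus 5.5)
The plan is to peel off the principal parts one singularity at a time, using only local Laurent expansions. We interpret the statement in its natural setting: $M$ is meromorphic on a domain $D$ (e.g.\ the plane, or a region containing all the relevant singularities) and has only the $m$ poles $z_n$ in $D$. The multiplicity $l_n$ is the pole order at $z_n$.

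First, fix one pole $z_n$. Since $z_n$ is a pole of order $l_n$, the function $g_n(z)=(z-z_n)^{l_n}M(z)$ has a removable singularity at $z_n$ and extends holomorphically to a disc around $z_n$. Expanding $g_n$ in its Taylor series gives $g_n(z)=\sum_{k\ge0}c_k(z-z_n)^k$ with $c_k=g_n^{(k)}(z_n)/k!$. Dividing by $(z-z_n)^{l_n}$ yields the Laurent expansion
$$M(z)=\sum_{j=1}^{l_n}\frac{c_{l_n-j}}{(z-z_n)^j}+h_n(z),$$
where $h_n$ is holomorphic near $z_n$. Setting $A_{n,j}=c_{l_n-j}$ gives exactly
$$A_{n,j}=\frac{1}{(l_n-j)!}\lim_{z\to z_n}\frac{\dd^{l_n-j}}{\dd z^{l_n-j}}\big[(z-z_n)^{l_n}M(z)\big].$$
The limit is needed only because $g_n$ is a priori defined on a punctured disc; its derivatives extend continuously to $z_n$. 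This defines the principal part $f_n$.

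Next, define $F(z)=M(z)-\sum_{n=1}^m f_n(z)$. Each $f_n$ is a rational function holomorphic everywhere except at $z_n$. Hence near a given $z_k$ the terms $f_n$ with $n\neq k$ are holomorphic. The difference $M-f_k=h_k$ is also holomorphic there, so $F$ has a removable singularity at each $z_k$. Away from the $z_k$, every term is holomorphic. After removing the removable singularities, $F$ is therefore holomorphic on all of $D$, which proves the decomposition.

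The only delicate point is bookkeeping, not a genuine obstacle. One must make sure that the poles are isolated, which holds for meromorphic functions, and that only finitely many poles are present, as the statement assumes. In the later soliton application, where there are infinitely many periodic poles, this theorem will be applied on bounded regions or periods rather than globally. There the convergence of $\sum f_n$ is the real issue, and it is handled separately in the paper.
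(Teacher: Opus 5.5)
Your proof is correct and follows essentially the same route as the paper: Taylor-expand $(z-z_n)^{l_n}M(z)$ at $z_n$ to read off the principal-part coefficients $A_{n,j}$, then remove all principal parts so that a holomorphic remainder is left. The only difference is presentational: you subtract all $f_n$ at once and check that $F=M-\sum_n f_n$ has a removable singularity at each $z_k$, which is a cleaner version of the paper's one-pole-at-a-time peeling.
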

\begin{proof}
	Expand function $M(z)$ into a Laurent series at $z=z_n$, and let the holomorphic part be $F_n$, one obtains $M(z)=F_n(z)+f_n(z).$
	Consider function $W_n(z)=(z-z_n)^{l_n}M(z)$ and expand it into a  Taylor series at $z=z_n$, then $W_n(z)=\sum_{j=0}^{\infty}w_j(z-z_n)^j,$
	where $w_j=\frac{1}{j!}\partial_z^{j}W_n|_{z=z_n}$. Besides,
	$$W_n(z)=(z-z_n)^{l_n}F_k(z)+\sum_{j=1}^{l_n}{A_{n,j}}(z-z_n)^{l_n-j}.$$
	Thus
	$$A_{n,l_n-j}=w_j\Rightarrow A_{n,j}=w_{l_n-j}=\frac{1}{(l_n-j)!}\partial_z^{l_n-j}W_n|_{z=z_n}.$$
	For another singularity $z=z_k$, setting  $F_n+f_n={F}_k+f_k+f_n,$ where $F_k$ is holomorphic at $z=z_k$, similarly we can obtain $f_k$.  In this way, we can separate out all $f_n$, leaving the remaining part as the holomorphic function $F(z)$.
\end{proof}
Based on Theorem \ref{the1}, if $l_n=1$ and $F(z)=0$, setting $\mathbf{Res}[M,z_n]=\Omega_n/(2\pi \ii)$, one obtains
$$M(z)=\sum_{n=1}^{m}\frac{-\ii{\Omega}_n}{2\pi(z-z_n)}=\sum_{n=1}^{m}\frac{-\ii{\Omega}_n(z^*-z_n^*)}{2\pi|z-z_n|^2}.$$
Consider vector function $M=\phi_x(z)\mathbf{e}_x+\phi_y(z)\mathbf{e}_y$, where $z=x+y\ii$. Since $\ii \phi_x(z)=\phi_y(z)$, denoting $\mathbf{Res}[\phi_x,z_n]=\Omega_{n,1}/(2\pi \ii)$, $\mathbf{Res}[\phi_y,z_n]=\Omega_{n,2}/(2\pi \ii)$, then $\ii\Omega_{n,1}=\Omega_{n,2}$, and $ M(z)=\sum_{n=1}^{m}\frac{{\Omega}_{n,1}(\mathbf{e}_x+\ii\mathbf{e}_y)[-\ii(x-x_n)-(y-y_n)]}{2\pi[(x-x_n)^2+(y-y_n)^2]}.$
Hence
\begin{align} \label{A1}
	\mathbf{A}\!=\!\re[M(z)]\!=\!\sum_{n=1}^{m}\frac{\sigma_n[(x-x_n)\mathbf{e}_y\!-\!(y-y_n)\mathbf{e}_x]}{[(x-x_n)^2+(y-y_n)^2]},
\end{align}
where $\sigma_n=\Omega_{n,1}/(2\pi)$. Since the topological vector potential essentially corresponds to the phase gradient flow, the sum function of this series must converge to the corresponding phase gradient flow. For solitons, which possess countably many singularities, the expression corresponds to the limit as $m\rightarrow\infty$. However, as $m\rightarrow+\infty,$ $\mathbf{A}(x,y=0)\cdot\mathbf{e}_x$ cannot provide a strict summation order due to the series not being absolutely convergent. It remains unclear how the series should be summed in a specific order to converge to the phase gradient flow. Since Formula \eqref{phi2} is given directly by the phase gradient flow, it provides us with a reference. We aim to investigate if there exists a summation form such that its integral converges to an expression with the structure of \eqref{phi2}.

Suppose a soliton has $2p$ singularities $z_n^\pm=x_n+\ii y_n^\pm$ in a magnetic monopole distribution period $[-\frac{T}{2},\frac{T}{2}]$, and $y_n^+=-y_n^-, |y_n^\pm|<\frac{T}{2}$. Their charges are $\sigma_n^\pm$, and $\sigma_n^+=-\sigma_n^-.$ Then $x_n^\pm+\ii (y_n^\pm+ NT)$ are also singularities with charges $\sigma_n^\pm$.  Here the ``$\pm$'' sign in $y_n^\pm$ corresponds to positive and negative values, while the ``$\pm$'' in $\sigma_n^\pm$ is associated with the sign of
$y_n^\pm$ and does not indicate the sign of the charge. We assume \eqref{A1} has the form as
\begin{align}\label{AAA}
	\textbf{A}\!=\!\sum_{N=-\infty}^{\infty}&\sum_{n=1}^{p}\Bigg(\sigma_n^+\frac{(x\!-\!x_n)\mathbf{e}_y\!-\![y\!-\!(y_n^+\!+\!NT)]\mathbf{e}_x}{|z-z_{n,N}^+|^2}\nonumber\\
	&\!+\!\sigma_n^-\frac{(x\!-\!x_n)\mathbf{e}_y\!-\![y\!-\!(y_n^-\!+\!NT)]\mathbf{e}_x}{|z-z_{n,N}^-|^2}\Bigg).
\end{align}
It must be pointed out that the period of the singularity distribution and the period of the magnetic monopole distribution are not equivalent. Let $I_0=\left[-\frac{I}{2},\frac{I}{2}\right]$ be the period interval of the singularity distribution. When singularities do not lie on the boundary of $I_0$, we can take $T=I$; when singularities lie on the boundary of $I_0$, their conjugacy property indicates the existence of magnetic monopole pairs with opposite topological charges on the upper and lower boundaries, so $I_0$ is not a period interval for the magnetic monopole distribution. Since there are no magnetic monopoles on the real axis, we can take $T=2I$ in this case. Under the Cauchy principal value, by swapping the order of summation over $N$ and $n$
in (\ref{AAA}) and replacing $N$ with $-N$ in the second term (this is equivalent to summing symmetrically about the real axis), one can obtain
\begin{align}\label{A2}
	&\mathrm{p.v.}~\textbf{A}(x,y=0)\cdot \mathbf{e}_x\nonumber\\
	=&\lim\limits_{M\rightarrow+\infty}\sum_{n=1}^{p}2\sigma_n^+\!\sum_{N=-M}^{M}\frac{y_n^++NT}{(x\!-\!x_n)^2+(y_n^+\!+\!NT)^2}.
\end{align}
Before discussing the expression of limit \eqref{A2}, we first present the following theorem:
\begin{theorem} \label{the2}
	Let $z\in\mathbb{C}\backslash\mathbb{Z}$, then
	$$\cot(\pi z)=\frac{1}{\pi z}+\frac{1}{\pi}\sum_{n=1}^{\infty}\frac{2z}{z^2-n^2}.$$
\end{theorem}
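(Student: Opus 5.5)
We need to prove the partial fraction expansion of $\cot(\pi z)$. The plan is to define the difference
$$g(z)=\pi\cot(\pi z)-\frac{1}{z}-\sum_{n=1}^{\infty}\frac{2z}{z^2-n^2}$$
and show that it is a bounded entire function vanishing at the origin. Liouville's theorem then gives $g\equiv 0$, and dividing by $\pi$ yields the statement.

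\textbf{Step 1: the series defines a meromorphic function.} First we check that the series $\sum_{n\ge1}\frac{2z}{z^2-n^2}$ converges absolutely and uniformly on compact subsets of $\mathbb{C}\backslash\mathbb{Z}$. On $|z|\le R$ and $n\ge 2R$ one has $|z^2-n^2|\ge n^2/2$, so the tail is dominated by $4R\sum n^{-2}$. Hence the sum is meromorphic on $\mathbb{C}$, with at most simple poles at the integers. Writing $\frac{2z}{z^2-n^2}=\frac{1}{z-n}+\frac{1}{z+n}$ shows the residue at each $\pm n$ is $1$; the term $1/z$ supplies residue $1$ at $0$. Since $\pi\cot(\pi z)$ also has simple poles exactly at the integers with residue $1$, the difference $g$ has only removable singularities and extends to an entire function.

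\textbf{Step 2: periodicity and parity.} Using the symmetric form $\sum_{n=-M}^{M}\frac{1}{z-n}$, shifting $z\mapsto z+1$ changes the partial sum only by $\frac{1}{z+M+1}-\frac{1}{z-M}\to0$. Thus the series part is $1$-periodic, and so is $g$. Both sides are odd in $z$, so $g$ is odd and $g(0)=0$.

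\textbf{Step 3: boundedness.} This is the main obstacle. By periodicity it suffices to bound $g$ on the strip $0\le \re z\le1$. On the compact part $|\im z|\le 1$, $g$ is continuous, hence bounded. For $|\im z|\ge1$, write $z=x+\ii y$. We have $|\cot(\pi z)|\to1$ uniformly as $|y|\to\infty$, so $\pi\cot(\pi z)$ is bounded. For the series we use
$$\Big|\frac{2z}{z^2-n^2}\Big|\le \frac{C|y|}{y^2+n^2}\quad(|y|\ge1,\ 0\le x\le1,\ n\ge2).$$
Comparing the sum with $\int_0^\infty \frac{|y|\,\dd t}{y^2+t^2}=\pi/2$ shows the series is bounded uniformly in the strip. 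Getting this estimate uniform in $x$, including the $n=1$ term, is the only delicate point; it follows from $|z^2-n^2|\ge c(y^2+n^2)$ for $|y|\ge1$, since $\re(z^2-n^2)=x^2-y^2-n^2$ and $\im(z^2)=2xy$.

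\textbf{Step 4: conclusion.} Liouville's theorem gives that $g$ is constant, and $g(0)=0$ then forces $g\equiv0$. Dividing by $\pi$ yields the stated formula for every $z\in\mathbb{C}\backslash\mathbb{Z}$.

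An alternative route would be to take the logarithmic derivative of Euler's product $\sin\pi z=\pi z\prod_{n\ge1}(1-z^2/n^2)$. I prefer the Liouville argument because it is self-contained.
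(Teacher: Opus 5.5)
Your proof is correct, but it is not the paper's route.

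The paper uses real Fourier analysis. It expands $\cos(zt)$ on $[-\pi,\pi]$, finds $a_n=(-1)^n\frac{2z\sin(\pi z)}{\pi(z^2-n^2)}$, and evaluates the series at $t=\pi$. That evaluation is legitimate because the $2\pi$-periodic extension of the even function $t\mapsto\cos(zt)$ is continuous there and $a_n=O(n^{-2})$. It then divides by $\sin(\pi z)\neq0$. This is a pointwise computation for each fixed $z\notin\mathbb{Z}$. It needs only a standard Fourier convergence theorem and no estimates at infinity.

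Your Liouville proof works differently. It matches principal parts, gets periodicity from the symmetric partial sums, bounds the difference on a strip, and fixes the constant by oddness. Its cost is the uniform bound of Step 3, which you handle correctly. In fact only the real part is needed: for $|y|\ge1$, $0\le x\le1$, $n\ge1$ one has $|z^2-n^2|\ge y^2+n^2-x^2\ge\frac12(y^2+n^2)$ and $|z|\le\sqrt2|y|$. So your estimate holds with $C=4\sqrt2$ for every $n\ge1$, not just $n\ge2$.

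In return, you get the formula as an identity of meromorphic functions with locally uniform convergence. Your Step 2 also already isolates the symmetric-sum form $\pi\cot(\pi z)=\lim_{M\to\infty}\sum_{n=-M}^{M}\frac{1}{z+n}$. That is exactly the principal-value version the paper invokes right after this theorem.
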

\begin{proof}
	Performing the Fourier expansion of the function $\cos(zt)$ ($t\in[-\pi,\pi]$) with a period of $2\pi$, we have
	$$\cos(zt)=\frac{a_0}{2}+\sum_{n=1}^{\infty}a_n\cos(nt),$$
	where
	\begin{align*}
		a_n=\frac{1}{\pi}\int_{-\pi}^\pi\cos(zt)\cos(nt)\dd t=(-1)^n\frac{2z\sin(\pi z)}{\pi(z^2-n^2)}.
	\end{align*}
	Taking $t=\pi$ gives
	$$\cot(\pi z)=\frac{1}{\pi z}+\frac{1}{\pi}\sum_{n=1}^{\infty}\frac{2z}{z^2-n^2}.$$
	Hence the theorem.
\end{proof}

Define 
$$P_M\equiv\sum\limits_{N=-M}^{M}\frac{y_n^++NT}{(x-x_n)^2+(y_n^++NT)^2}.$$
Since $\frac{b}{a^2+b^2}=\re\left(\frac{1}{b-\ii a}\right)$, one can obtain
\begin{align}\label{PM}
	P_M=\frac{1}{T}\re\left(\sum_{N=-M}^{M}\frac{1}{z^++N}\right),
\end{align}
where $z^+=\frac{y_i^+-\ii(x-x_i)}{T}.$ On the other hand, with $\frac{2z}{z^2-n^2}=\frac{1}{z-n}+\frac{1}{z+n}$, theorem \ref{the2} shows
\begin{align}\label{A3}
	\pi\cot(\pi z)=\mathrm{p.v.} \sum_{n=-\infty}^{\infty}\frac{1}{z+n}.
\end{align}
Under the Cauchy principal value, \eqref{PM} and \eqref{A3} show
\begin{align*}
	P_\infty\equiv&\lim\limits_{M\rightarrow+\infty}P_M=\frac{\pi}{T}\re\left[\cot\left(\pi \frac{y_n^+-\ii(x-x_n)}{T}\right)\right]\\
	=&\frac{\pi}{T}\frac{\sin\frac{2\pi y_n^+}{T}}{\cosh\frac{2\pi(x-x_n)}{T}-\cos\frac{2\pi y_n^+}{T}}.
\end{align*}
Under the transformation $\xi=\frac{2\pi(x-x_n)}{T},\e^{\xi}-\cos\frac{2\pi y_n^+}{T}=\sin\frac{2\pi y_n^+}{T} w$, one obtains
\begin{align}\label{A4}
	\int_{-\infty}^\infty P_\infty\dd x=2\int_{\tan\frac{\pi y_n^+}{T}}^\infty\frac{1}{w^2+1} \dd w={\pi}-\frac{2\pi y_n^+}{T}.
\end{align}
Combining with \eqref{A4}, integrating \eqref{A2} yields
\begin{align}\label{A5}
	&\int_{-\infty}^{\infty}\mathrm{p.v.}~\textbf{A}(x,y=0)\cdot \mathbf{e}_x\dd x\nonumber\\
	=&\sum_{n=1}^{p}2\sigma_n^+\left({\pi}-\frac{2\pi y_n^+}{T}\right)=\frac{4\pi}{T}\sum_{n=1}^{p}\sigma_n^+\left(\frac{T}{2}-y_n^+\right)\nonumber\\
	=&\frac{2\pi}{T}\sum_{n=1}^{p}\left[\sigma_n^+\left(\frac{T}{2}-y_n^+\right)+\sigma_n^-\left(-\frac{T}{2}-y_n^-\right)\right].
\end{align}
\eqref{A5} shares the same structure as \eqref{phi2}. Due to the assumptions we made about the summation order of the topological vector potential, this result does not fully account for the general case. However, for the scenario involving bimodal solitons, numerical computations indicate that the formula holds for all examples shown in Fig. \ref{fig5}. Moreover, unlike the finite summation for RWs, the infinite summation of the soliton topological vector potential prevents the interchangeability of integration and summation orders. This is also the reason why the phase shift of solitons differs from that of RWs. For RWs, we can only obtain phase shifts that are integer multiples of $\pi$, while for solitons, it can take on values that are not integer multiples of $\pi$.

\section{Solutions of the equation $\e^{2\omega_1z}\varpi^2+\e^{-2\omega_1z}|\gamma_1|^4=\mathbf{\Gamma}\mathbf{R}=0$ }\label{app1}
We just consider $|\beta_2|^2=|\gamma_1|^2=\varpi$ for convenience. Under transformation $y=2\omega_1 x$, the gradient flow $\partial_x\phi_1$ can be reduced to
\begin{align*}
	\frac{\partial \phi_1}{\partial y}
	&=\frac{P_1}{(\e^{y}+\e^{-y})P_0},
\end{align*}
\newpage
\noindent
where
\begin{align*}
	P_0=&(1-\omega)^2(\e^{(1+\omega)y}+\e^{-(1+\omega)y})\\
	&+(1+\omega)^2(\e^{(1-\omega)y}+\e^{-(1-\omega)y})+8\omega,\\
	P_1=&2(-2-\omega+\omega^2)(\e^{(1-\omega)y}+\e^{-(1-\omega)y})\\
	&+2(-2+\omega+\omega^2)(\e^{(1+\omega)y}+\e^{-(1+\omega)y})\\
	&+3\omega(\omega^2-1)(\e^{-2y}+\e^{2y})+2\omega(-7+3\omega^2).
\end{align*}
As $\e^{y}+\e^{-y}=0$, we have $\e^{(1-\omega)y}=-\e^{-(1+\omega)y}$ and $\e^{-(1-\omega)y}=-\e^{(1+\omega)y}.$
Therefore $P_0=0$ shows $\e^{(1-\omega)y}+\e^{-(1-\omega)y}=-2$ and $\e^{(1+\omega)y}+\e^{-(1+\omega)y}=2.$
Further we obtain
\begin{align} \label{B1}
	P_1=3\omega(\omega^2-1)(\e^{-2y}+\e^{2y}+2).
\end{align}
$\e^{-2y}+\e^{2y}+2=(\e^{y}+\e^{-y})^2=0$ shows $P_1=0,$ indicating that $P_0=0$ implies $P_1=0$. The solutions of the equation $\e^{y}+\e^{-y}=0$ can be written as
\begin{align}\label{B2}
	y=\frac{\pm\pi\ii}{2}+2\ii\pi n=\frac{\pm3\pi\ii}{2}+2\ii\pi (n-1),
\end{align}
which are pure imaginary numbers. Thus we focus on the pure imaginary number solutions of the equations
$$\e^{(1-\omega)y}+\e^{-(1-\omega)y}=-2,~\e^{(1+\omega)y}+\e^{-(1+\omega)y}=2.$$
The first equation gives
\begin{align}\label{B3}
	y=\frac{\pm\ii \pi+2\ii a \pi}{1-\omega},
\end{align}
where $a\in\mathbb{Z}$, while the second one gives
\begin{align}\label{B4}
	y=\frac{2\ii b \pi}{1+\omega},
\end{align}
$b\in\mathbb{Z}$. We need to find suitable $a, b$ such that the intersection of (\ref{B3}) and (\ref{B4}) is non-empty and can be represented as a subset of (\ref{B2}). Consider $\omega=p/q,~p,q\in\mathbb{Z}^+,~q>p$.
A simple construction is to make $ a=m_1 (q-p)\pm\frac{q-p-2}{4},~b=\left(m_1\pm\frac{1}{4}\right)(p+q),$
where $a,b,m_1\in\mathbb{Z}.$ This requires
\begin{align}\label{B5}
	4|q+p,~4|q-p-2.
\end{align}
In this case, (\ref{B3}) and (\ref{B4}) satisfy
$\frac{\pm\ii \pi+2\ii a \pi}{1-\omega}=\pm q\frac{\pi\ii}{2}+2qm_1\pi\ii$ and $\frac{2\ii b \pi}{1+\omega}=\pm q\frac{\pi\ii}{2}+2qm_1\pi\ii,$
respectively. Hence $q=4n_2+1$ or $q=4n_2+3,$ where $n_2\in\mathbb{Z}^+\cup\{0\}.$ Combining with (\ref{B5}) , one obtains the corresponding $p$ as $4n_1+3$ and $4n_1+1$, where $n_1\in\mathbb{Z}^+\cup\{0\}.$ Thus 
\begin{align}\label{B6}
	\omega=\frac{4n_1+3}{4n_2+1}~~{\rm{or}}~~\omega=\frac{4n_1+1}{4n_2+3}.
\end{align}
Although this selection may result in some duplicate values, it does not affect our results.


\begin{thebibliography}{99}

\bibitem{Dirac1}P. A. M. Dirac, Quantised singularities in the electromagnetic field. \newblock\href{https://doi.org/10.1098/rspa.1931.0130}{Proc. R. Soc. Lond. A \textbf{133}, 60 (1931)}.

\bibitem{Quantum1} R. P. Feynman, Space-Time Approach to Non-Relativistic Quantum Mechanics, \newblock\href{https://doi.org/10.1103/RevModPhys.20.367}{Rev. Mod. Phys. 20, 367 (1948).}

\bibitem{Quantum2}D. Wu, Y.-F. Jiang, X.-M. Gu, L. Huang, B. Bai, Q.-C. Sun, X. Zhang, S.-Q. Gong, Y. Mao, H.-S. Zhong, M.-C. Chen, J. Zhang, Q. Zhang, C.-Y. Lu, J.-W. Pan, Jian-Wei, Experimental Refutation of Real-Valued Quantum Mechanics under Strict Locality Conditions, \newblock\href{https://doi.org/10.1103/PhysRevLett.129.140401}{Phys. Rev. Lett. 129, 140401(2022).}

\bibitem{monopole1} K. A. Milton, Theoretical and experimental status of magnetic monopoles, \newblock\href{https://doi.org/10.1088/0034-4885/69/6/R02}{Rep. Prog. Phys. 69, 1637-1711 (2006).}

\bibitem{momentum1} Z. Fang, N. Nagaosa, K. S. Takahashi, A. Asamitsu, R. Mathieu, T. Ogasawara, H. Yamada, M. Kawasaki, Y. Tokura, and K. Terakura, The anomalous Hall effect and magnetic monopoles in momentum space. \newblock\href{https://doi.org/10.1126/science.10894}{Science, 302(5642), 92-95 (2003).}

\bibitem{momentum2} G. E. Volovik, The Universe in a Helium Droplet (Clarendon Press, Oxford, 2003).

\bibitem{momentum3} A. Bérard, H. Mohrbach, Monopole and Berry phase in momentum space in noncommutative quantum mechanics, \newblock\href{https://doi.org/10.1103/PhysRevD.69.127701}{Phys. Rev. D 69, 127701 (2004).}

\bibitem{momentum4} T. Dubček, C. J. Kennedy, L. Lu, W. Ketterle, M. Soljači\'{c}, and H. Buljan, Weyl Points in Three-Dimensional Optical Lattices: Synthetic Magnetic Monopoles in Momentum Space,   \newblock\href{https://doi.org/10.1103/PhysRevLett.114.225301}{Phys. Rev. Lett. 114, 225301 (2015).}

\bibitem{parameter1} M. V. Berry, Quantal phase factors accompanying adiabatic changes, \newblock\href{https://doi.org/10.1098/rspa.1984.0023}{Proc. R. Soc. London Ser. A 392, 45-57 (1984).}

\bibitem{parameter2} D. Xiao, M.-C. Chang, and Q. Niu, Berry phase effects on electronic properties, \newblock\href{https://doi.org/10.1103/RevModPhys.82.1959}{Rev. Mod. Phys. 82, 1959 (2010).}

\bibitem{parameter3} P. Bruno, Nonquantized Dirac Monopoles and Strings in the Berry Phase of Anisotropic Spin Systems, \newblock\href{https://doi.org/10.1103/PhysRevLett.93.247202}{Phys. Rev. Lett. 93, 247202 (2004)}

\bibitem{parameter4} B. Wu, Q. Zhang, J. Liu, Anomalous monopoles of an interacting boson system,  \newblock\href{https://doi.org/10.1016/j.physleta.2010.12.030}{Phys. Lett. A 375, 545-548 (2011)}.

\bibitem{Zhao5}L.-C. Zhao, L.-Z. Meng, Y.-H. Qin, Z.-Y. Yang and J. Liu, Virtual Dirac Monopoles underlying the Nontrivial Phases of Rogue Waves. \newblock\href{https://doi.org/10.1088/0256-307X/42/11/110002}{Chin. Phys. Lett. \textbf{42}, 110002, (2025)}.

\bibitem{Zhao3}L.-C. Zhao, Y.-H. Qin, C. Lee, and J. Liu, Classification of dark solitons via topological vector potentials. \newblock\href{https://doi.org/10.1103/PhysRevE.103.L040204}{Phys. Rev. E \textbf{103}, L040204 (2021)}.

\bibitem{Zhao2}J.-D. Li, L.-Z. Meng, and L.-C. Zhao, Phase properties of several nonlinear optical waves described by rational solutions. \newblock\href{https://doi.org/10.1103/PhysRevA.107.013511}{Phys. Rev. A \textbf{107}, 013511 (2023)}.
		
\bibitem{Qin3}Y.-H. Qin, X. Zhang, L. Ling, and L.-C. Zhao, Phase characters of optical dark solitons with third-order dispersion and delayed nonlinear response. \newblock\href{https://doi.org/10.1103/PhysRevE.106.024213}
		{Phys. Rev. E \textbf{106}, 024213 (2022)}

    	
\bibitem{Zhao1} Y.-H. Wu, L.-C. Zhao, C. Liu, Z.-Y. Yang, and W.-L. Yang, The topological phase of bright solitons. \newblock\href{https://doi.org/10.1016/j.physleta.2022.128045}{Phys. Lett. A \textbf{434}, 128045 (2022)}.

\bibitem{Zhao4}H. Yu and L.-C. Zhao, Aharonov-Anandan phase and topological vector potentials underlying an Akhmediev breather. \newblock\href{https://doi.org/10.1103/PhysRevA.110.053504}{Phys. Rev. A \textbf{110}, 053504 (2024)}.	

\bibitem{Ray2014} M. W. Ray, E. Ruokokoski, S. Kandel, M. Möttönen, and D. S. Hall, Observation of Dirac monopoles in a synthetic magnetic field, \newblock\href{https://doi.org/10.1038/nature12954}{Nature 505, 657-660 (2014).}

\bibitem{Pietila} V Pietilä and M Möttönen, Creation of Dirac monopoles in spinor Bose-Einstein condensates, \newblock\href{https://doi.org/10.1103/PhysRevLett.103.030401}{Phys. Rev. Lett. 103, 030401 (2009).}

\bibitem{Ray2015} M. W. Ray, E. Ruokokoski, K. Tiurev, M. Möttönen, and D. S. Hall, Observation of isolated monopoles in a quantum field, \newblock\href{10.1126/science.1258289}{Science 348, 544-547 (2015).}

\bibitem{JHZheng} X-Y Chen, L Jiang, W.-K. Bai, T. Yang, and J.-H. Zheng, Synthetic half-integer magnetic monopole and single-vortex dynamics in spherical
Bose-Einstein condensates, \newblock\href{https://doi.org/10.1103/PhysRevA.111.033322}{Phys. Rev. A 111, 033322 (2025).}

\bibitem{ref3}A. Rogister, Parallel Propagation of Nonlinear LowFrequency Waves in High-$\beta$ Plasma. \newblock\href{https://doi.org/10.1063/1.1693399}{Phys. Fluids \textbf{14}, 2733 (1971)}.		 
     
\bibitem{ref1}E. Mj{\o}lhus, On the modulational instability of hydromagnetic waves parallel to the magnetic field. \newblock\href{https://doi.org/10.1017/S0022377800020249}{J. Plasma Phys. \textbf{16}, 321-334 (1976)}.

\bibitem{Mio}W. Mio, T. Ogino, K. Minami and S. Takeda, Modified nonlinear Schrödinger equation for Alfvén waves propagating along the magnetic field in cold plasmas. \newblock\href{https://doi.org/10.1143/JPSJ.41.265}{J. Phys. Sot. Japan \textbf{41}, 265 (1976)}. 			    
			
\bibitem{ref2}Y. H. Ichikawa and S. Watanabe, Solitons, envelope solitons in collisionless plasmas, \newblock\href{https://doi.org/10.1051/jphyscol:1977603}{J. Phys. Colloques 38, C6-15-C6-26 (1977)}.   
    
\bibitem{Plasma1} F. Verheest, B. Buti, Parallel solitary Alfvén waves in warm multi-species beam-plasma systems. Part 1, \newblock\href{https://doi.org/10.1017/S0022377800024041}{J. Plasma Physics 47, 15-24 (1992).}
    
\bibitem{Plasma2} S. R. Spangler and B. B. Plapp, Characteristics of obliquely propagating, nonlinear Alfvén waves, \newblock\href{http://dx.doi.org/10.1063/1.860391}{Phys. Fluids B 4, 3356 (1992).}   
    
\bibitem{Plasma3} B. Deconinck, P. Meuris, F. Verheest, Oblique nonlinear Alfvén waves in strongly magnetized beam plasmas. part 1. nonlinear vector evolution equation, \newblock\href{https://doi.org/10.1017/S0022377800017268}{J. Plasma Physics 50, 445-455 (1993).}  
    
\bibitem{Plasma4} V. Krishan, L. Nocera, Relaxed states of Alfvénic turbulence, \newblock\href{https://doi.org/10.1016/S0375-9601(03)01105-8}{Phys. Lett. A 315, 389-394 (2003).} 

\bibitem{Plasma5} A. C.-L. Chian, W. M. Santana, E. L. Rempel, F. A. Borotto, T. Hada, and Y. Kamide, Chaos in driven Alfvén systems: unstable periodic orbits and chaotic saddles, \newblock\href{https://doi.org/10.5194/npg-14-17-2007}{Nonlin. Processes Geophys. 14, 17–29 (2007).}
    
\bibitem{optic1} D. Mihalache, N. Truta, N.-C. Panoiu and D.-M. Baboiu, Analytic method for solving the modified nonlinear Schrödinger equation describing soliton propagation along optical fibers, \newblock\href{https://doi.org/10.1103/PhysRevA.47.3190}{Phys. Rev. A 47, 3190 (1993).}
    
\bibitem{optic2} X.-J. Chen, J. Yang, Direct perturbation theory for solitons of the derivative nonlinear Schrödinger equation and the modified nonlinear Schrödinger equation, \newblock\href{https://doi.org/10.1103/PhysRevE.65.066608}{Phys. Rev. E 65, 066608 (2002).}  
    
\bibitem{optic3} N. Tzoar and M. Jain, Self-phase modulation in long-geometry optical waveguides, \newblock\href{https://doi.org/10.1103/PhysRevA.23.1266}{Phys. Rev. A 23, 1266 (1981).}   
    
\bibitem{optic4} K. Ohkuma, Y. H. Ichikawa, and Y. Abe, Soliton propagation along optical fibers, \newblock\href{https://doi.org/10.1364/OL.12.000516}{Opt. Lett. 12, 516-518 (1987).} 
    
\bibitem{FM1} L. Kavitha, M. Saravanan, V. Senthilkumar, R. Ravichandran, and D. Gopi, Collision of electromagnetic solitons in a weak ferromagnetic medium,  
\newblock\href{http://dx.doi.org/10.1016/j.jmmm.2013.11.041}{J. Magn. Magn. Mater. 355, 37-50 (2014).}

\bibitem{FM2} M. Saravanan, Current-driven electromagnetic soliton collision in a ferromagnetic nanowire, \newblock\href{http://dx.doi.org/10.1103/PhysRevE.92.012923}{Phys. Rev. E 92, 012923 (2015).}
    
\bibitem{FM3} L. Kavitha, M. Saravanan, B. Srividya, and D. Gopi, Breatherlike electromagnetic wave propagation in an antiferromagnetic medium with Dzyaloshinsky-Moriya interaction, \newblock\href{http://dx.doi.org/10.1103/PhysRevE.84.066608}{Phys. Rev. E 84, 066608 (2011).}
    
    
\bibitem{Kaup} D. J. Kaup and A. C. Newell, An exact solution for a derivative nonlinear Schrödinger equation, \newblock\href{https://doi.org/10.1063/1.523737}{J. Math. Phys. 19, 798-801 (1978).}

\bibitem{Ling} B. Guo, L. Ling, Q. P. Liu, High‐order solutions and generalized Darboux transformations of derivative nonlinear Schrödinger equations, 
               \newblock\href{https://doi.org/10.1111/j.1467-9590.2012.00568.x}{Stud. Appl. Math. 130, 317–344 (2012)}.

\bibitem{Xu} S. Xu, J. He, L. Wang, The Darboux transformation of the derivative nonlinear Schrödinger equation, \newblock\href{https://doi.org/10.1088/1751-8113/44/30/305203}{J. Phys. A: Math. Theor. 44, 305203 (2011).}

\bibitem{He1}L. Guo, L. Wang, Y. Cheng, and J. He, Higher-order rogue waves and modulation instability of the two-component derivative nonlinear Schrödinger equation. \newblock\href{https://doi.org/10.1016/j.cnsns.2019.104915}{Commun Nonlinear Sci. \textbf{79}, 104915 (2019).}
		
\bibitem{Zuo1}Q. Liu and D. W. Zuo, Semi-rational solutions of the coupled derivative nonlinear Schr\"{o}dinger equation. \newblock\href{https://doi.org/10.1016/j.ijleo.2023.170680}{Optik. \textbf{277}, 170680 (2023)}.


\bibitem{Chan} H. N. Chan, K. W. Chow, D. J. Kedziora, R. H. J. Grimshaw, and E. Ding, Rogue wave modes for a derivative nonlinear Schrödinger model.  \newblock\href{https://doi.org/10.1103/PhysRevE.89.032914}{Phys. Rev. E \textbf{89}, 032914 (2014)}.

\bibitem{Min1} M. Li, J.-H. Xiao, W.-J. Liu, P. Wang, B. Qin, and B. Tian, Mixed-type vector solitons of the N-coupled mixed derivative nonlinear
Schr\"{o}dinger equations from optical fibers. \newblock\href{https://doi.org/10.1103/PhysRevE.87.032914}{Phys. Rev. E \textbf{87}, 032914 (2013)}.

    
\bibitem{Lin} H. Lin and L. Ling, Rogue wave pattern of multi-component
derivative nonlinear Schr\"{o}dinger equations, \newblock\href{https://doi.org/10.1063/5.0192741}{Chaos \textbf{34}, 043126 (2024).}

\bibitem{Akhmediev} N. Akhmediev, A. Ankiewicz, and M. Taki, Waves that appear
from nowhere and disappear without a trace, \newblock\href{https://doi.org/10.1016/j.physleta.2008.12.036}{Phys. Lett. A \textbf{373}, 675 (2009).}	

\bibitem{defocusing1} F. Baronio, M. Conforti, A. Degasperis, S. Lombardo,  M. Onorato, and S. Wabnitz, Vector Rogue Waves and Baseband Modulation Instability in the Defocusing Regime, \newblock\href{https://doi.org/10.1103/PhysRevLett.113.034101}{Phys. Rev. Lett. \textbf{113}, 034101 (2014).}
\bibitem{lingrw1} B.-L. Guo and L.-M. Ling, Rogue Wave, Breathers and Bright-Dark-Rogue Solutions for the Coupled Schr\"{o}dinger Equations, \newblock\href{http://iopscience.iop.org/0256-307X/28/11/110202}{Chin. Phys. Lett. \textbf{28}, 110202 (2011).}
\bibitem{zhaorw1} L.-C. Zhao and  J. Liu, Localized nonlinear waves in a two-mode nonlinear fiber, \newblock\href{https://doi.org/10.1364/JOSAB.29.003119}{J. Opt. Soc. Am. B \textbf{29}, 3119 (2012).}
\bibitem{zhaorw2} L.-C. Zhao and J. Liu, Rogue-wave solutions of a three-component coupled nonlinear Schr\"{o}dinger equation, \newblock\href{https://doi.org/10.1103/PhysRevE.87.013201}{Phys. Rev. E \textbf{87}, 013201 (2013).}
\bibitem{zhaorw3} L.-C. Zhao, G.-G. Xin, and Z.-Y. Yang, Rogue-wave pattern transition induced by relative frequency, \newblock\href{https://doi.org/10.1103/PhysRevE.90.022918}{Phys. Rev. E \textbf{90}, 022918 (2014).}	

\bibitem{chen1} S. Chen, C. Pan, P. Grelu, F. Baronio, and N Akhmediev, Fundamental Peregrine Solitons of Ultrastrong Amplitude Enhancement through Self-Steepening in Vector Nonlinear Systems, \newblock\href{https://doi.org/10.1103/PhysRevLett.124.113901}{Phys. Rev. Lett. 124, 113901 (2020).}

\bibitem{chen2} S. Chen, Y. Ye, J. M. Soto-Crespo, P. Grelu, and F. Baronio, Peregrine Solitons Beyond the Threefold Limit and Their Two-Soliton Interactions, \newblock\href{https://doi.org/10.1103/PhysRevLett.121.104101}{Phys. Rev. Lett. 121, 104101 (2018).}

    
\bibitem{CDNLS1}H. C. Morris and R. K. Dodd, The Two Component Derivative Nonlinear Schr\"{o}dinger Equation. \newblock\href{https://doi.org/10.1088/0031-8949/20/3-4/029}{Phys. Scr. \textbf{20}, 505 (1979)}.	
    
\bibitem{Ling2} L. Ling and Q. P. Liu, Darboux transformation for a two-component derivative nonlinear Schr\"{o}dinger equation. \newblock\href{https://doi.org/10.1088/1751-8113/43/43/434023}{J. Phys. A: Math. Theor. \textbf{43}, 434023 (2010)}.	
    
\bibitem{Kivshar1}Y. S. Kivshar and B. Luther-Davies, Dark optical solitons: physics and applications. \newblock\href{https://doi.org/10.1016/S0370-1573(97)00073-2}{Phys. Rep. \textbf{298}, 81 (1998)}.

\bibitem{Kivshar} Y. S. Kivshar, Dark Solitons in Nonlinear Optics, \newblock\href{https://doi.org/10.1109/3.199266}{IEEE Journal of Quantum Electronics 29(1): 250-264, 1993}


\bibitem{Qin2}Y.-H. Qin, L.-C. Zhao, Z.-Q. Yang, and L. Ling, Multivalley dark solitons in multicomponent Bose-Einstein condensates with repulsive interactions. \newblock\href{https://doi.org/10.1103/PhysRevE.104.014201}{Phys. Rev. E \textbf{104}, 014201 (2021)}.

\bibitem{Qin} J.-P. Yang and Y.-H. Qin, State transition dynamics of double-hump bright soliton in the coupled derivative Schr\"{o}dinger equation, \newblock\href{https://doi.org/10.1007/s11071-025-11806-9}{Nonlinear Dyn.  \textbf{113}, 32763-32781 (2025).}








		

		
		
	\end{thebibliography}
\end{document}